\documentclass[man,floatsintext,nolmodern,a4paper]{apa7}

\usepackage{amsmath,amssymb,amsfonts,amsthm}
\usepackage{mathtools}
\usepackage{natbib}
\usepackage{url}
\usepackage{enumitem}
\usepackage{booktabs}
\usepackage{xcolor}
\usepackage{tikz}
\usepackage{pgfplots}
\usepackage{float}
\usepackage{mdframed}
\usepackage{tabularx}
\usepackage{ragged2e}

\pgfplotsset{compat=1.18}
\usetikzlibrary{arrows.meta, positioning, calc, patterns}

\theoremstyle{plain}
\newtheorem{theorem}{Theorem}
\newtheorem{lemma}[theorem]{Lemma}
\newtheorem{proposition}[theorem]{Proposition}
\newtheorem{corollary}[theorem]{Corollary}

\theoremstyle{definition}
\newtheorem{definition}[theorem]{Definition}
\newtheorem{example}[theorem]{Example}

\theoremstyle{remark}
\newtheorem{remark}[theorem]{Remark}

\newcommand{\E}{\mathbb{E}}
\newcommand{\Prob}{\mathbb{P}}
\DeclareMathOperator{\PPV}{PPV}
\DeclareMathOperator{\NPV}{NPV}

\newcommand{\Lreq}{\Lambda_{\mathrm{req}}}
\newcommand{\Leff}{\Lambda_{\mathrm{eff}}}
\newcommand{\pcrit}{\pi_{\mathrm{crit}}}
\newcommand{\aeff}{\alpha_{\mathrm{eff}}}

\title{The Certainty Bound: Structural Limits on Scientific Reliability}
\shorttitle{The Certainty Bound}

\author{Marco Pollanen}
\authorsaffiliations{Department of Mathematics and Statistics, Trent University, Peterborough, ON K9L 0G2, Canada}

\authornote{
  ORCID: Marco Pollanen, \url{https://orcid.org/0000-0001-5356-1889}

  Correspondence: marcopollanen@trentu.ca

  Submitted to \textit{Meta-Psychology}. Participate in open peer review by sending an email to open.peer.reviewer@gmail.com. The full editorial process of all articles under review at \textit{Meta-Psychology} can be found following this link: \url{https://tinyurl.com/mp-submissions}. You will find this preprint by searching for the author's name.
}

\begin{document}

\maketitle

\begin{abstract}
\noindent Explanations of the replication crisis often emphasize misconduct, questionable research practices, or incentive misalignment, implying that behavioral reform is sufficient. This paper argues that a substantial component of the crisis is architectural: within binary significance-based publication architectures, even perfectly diligent researchers face structural limits on the reliability they can deliver.

The posterior log-odds of a finding equals prior log-odds plus $\log\Lambda$, where $\Lambda = (1-\beta)/\alpha$ is the experimental leverage. Interpreted architecturally, this implies a hard constraint: once evidence is coarsened to a binary significance decision, the decision rule contributes exactly $\log\Lambda$ to posterior log-odds. A target reliability $\tau$ is feasible if and only if $\pi \geq \pcrit$, and under fixed $\alpha$ this condition cannot, in general, be rescued by sample size alone. Two distinct mechanisms can drive effective leverage to~1: persistent unmeasured confounding in observational studies and unbounded specification search under publication pressure, without requiring bad faith. These results concern binary significance-based decision architectures and do not bound inference based on full likelihoods or richer continuous evidence summaries. Two collapse results formalize these mechanisms, while the Replication Pipeline Theorem and Minimum Pipeline Depth Corollary identify a quantitative evidentiary standard for escape.

Applied to independently documented parameters for pre-reform psychology ($\pi \approx 0.10$, power $\approx 0.35$), the framework implies a replication rate of 36\%, consistent with the Open Science Collaboration's figure. The framework also yields quantitative bridges to the philosophy of science, including Popperian falsification, Kuhnian paradigm shifts, and Lakatosian degenerative programmes. In low-prior settings below the single-study feasibility threshold, the natural unit of evidence is the replication pipeline rather than the individual experiment.
\end{abstract}

\bigskip

\noindent
{\em Keywords:}: replication crisis, reproducibility crisis, metascience, publication bias, statistical significance, false discovery rate, positive predictive value, multiple testing

\newpage

%==========================================================
% SECTION 1: INTRODUCTION
%==========================================================
\section{Introduction}
\label{sec:intro}

\subsection{A Mystery and Its Resolution}

In 2015, the Open Science Collaboration reported the results of 100 replication attempts in psychology. Under the significance-based replication criterion, 36\% of replication attempts produced a significant result in the same direction \citep{osc2015}. Much of the early reform discourse framed this outcome behaviorally: researchers were said to p-hack, selectively report, and respond to incentives that reward novelty over rigor. The implied remedy was therefore also behavioral: better scientists, better journals, and better norms.

This paper offers a different account, one that is more uncomfortable but also more tractable. The 36\% figure need not be read primarily as a symptom of bad behavior. It is consistent with a structural outcome implied by a mathematical identity under independently documented operating conditions. On this view, the central problem lies not in the moral character of individual researchers, but in the configuration of the evidential architecture.

The reliability of a statistically significant finding is governed by an accounting identity. Let $\pi$ denote the prior probability that a tested hypothesis is true and $\Lambda = (1-\beta)/\alpha$ the \emph{experimental leverage}, the ratio of power to the false-positive rate. Then the positive predictive value satisfies $\PPV = \pi\Lambda/[\pi\Lambda + (1-\pi)]$, and a test contributes exactly $\log\Lambda$ to the posterior log-odds of a genuine effect. This is Bayes' theorem in a familiar form. The contribution here is architectural rather than algebraic: once evidence is coarsened to a binary significance decision, the experiment contributes exactly $\log\Lambda$ to posterior log-odds. Within that architecture, the same decision rule cannot extract additional evidential gain. If $\pi$ is small or $\Lambda$ is modest, the posterior probability remains low regardless of sample size or investigator diligence.

Parameter ranges documented in the meta-science literature \emph{before} the replication project ($\pi \approx 0.10$, treated as a calibration parameter consistent with scenarios in \citealp{ioannidis2005}; median power $\approx 0.35$, \citealp{cohen1962,sedlmeier1989}) imply $\PPV \approx 0.44$ and a replication rate of approximately 36\%. These parameters were not calibrated to match the Open Science Collaboration's finding; the consistency is a structural implication of the framework applied to independently documented conditions.

\subsection{The Architectural Argument}

At its core, the paper shifts explanation from \emph{behavioral} to \emph{architectural}. Behavioral explanations locate the problem in researchers' choices: p-hacking, selective reporting, and incentive-responsive behavior. Architectural explanations locate it in the structure of the research system itself. In this paper, ``architectural'' means determined by field-level operating parameters ($\pi$, $\alpha$, $1-\beta$) and publication decision rules, rather than by idiosyncratic researcher behavior. The distinction matters because the two explanations prescribe different remedies and carry different moral implications.

An architectural explanation does not exonerate misconduct. Questionable research practices are real, and the specification search collapse (Section~\ref{sec:collapse}) formalizes their damage. But the architectural framing makes explicit what behavioral explanations can obscure: even a hypothetical field populated entirely by honest, careful, technically sophisticated researchers cannot, at these operating parameters, push PPV above 69\% at $\pi = 0.10$ and $\alpha = 0.05$.

\citet{ioannidis2005} showed that PPV is low under realistic assumptions. The present paper shows that, under fixed $\alpha$ and a binary significance-based publication architecture, PPV may be structurally bounded far below common reliability targets, a constraint of the architecture rather than a contingent failing. Where Ioannidis diagnosed a snapshot (the PPV at a given parameter configuration), the present framework diagnoses a trajectory: the generational dynamics, field lifetime, and degenerative programme criterion characterize how reliability evolves as fields mature and follow-up research accumulates. The PPV identity underlying the Certainty Bound has appeared in earlier work on false-positive rates (see, e.g., \citealp{wacholder2004}); the contributions here are design-level interpretation, thresholding, collapse analysis, and institutional implications.

A scope clarification is warranted. The framework addresses the reliability of \emph{binary significance claims}: the posterior probability that a statistically significant finding reflects a genuine effect. Continuous estimation, Bayesian inference, and richer likelihood-based approaches can extract more information from data. The present analysis characterizes the ceiling that binary publication architectures impose; these constraints arise from publication architecture, not from statistical theory itself.

The paper makes four contributions. First, it proves a structural infeasibility result for high reliability below a critical prior $\pcrit$ and identifies two mechanistically distinct collapse routes (observational confounding and specification search) that share the invariant $\Leff \to 1$. Second, it derives a heterogeneity tax via Jensen's inequality and a replication bridge linking the framework to observed replication rates. Third, it develops generational dynamics yielding a quantitative degenerative programme criterion. Fourth, it proves the Replication Pipeline Theorem, establishing the replication pipeline as the natural unit of evidence in low-prior settings.

Throughout, the aim is not to replace existing metascience tools, but to supply a structural criterion for when high reliability is mathematically feasible in the first place.

An interactive implementation of the Certainty Bound diagnostic, pipeline calculator, and reliability landscape is available at \url{https://mpollanen.github.io/certainty-bound-tool/}.\footnote{The tool is also archived on OSF: \url{https://osf.io/c5wun}.}

\subsection{Roadmap}

Section~\ref{sec:framework} develops the formal framework, Section~\ref{sec:majority} proves the Majority-False and Cost of Discovery Theorems, and Section~\ref{sec:replication} establishes the Replication Bridge and its consistency with the OSC finding. Sections~\ref{sec:collapse}--\ref{sec:escape} analyze collapse mechanisms and escape routes, including the Replication Pipeline Theorem. Sections~\ref{sec:dynamics}--\ref{sec:design} develop field dynamics, the reliability landscape, and design requirements for reliable research; Sections~\ref{sec:discussion} and~\ref{sec:conclusion} draw out implications and connections to the philosophy of science.

%==========================================================
% SECTION 2: THE CERTAINTY BOUND
%==========================================================
\section{The Certainty Bound: Formal Framework}
\label{sec:framework}

\subsection{Setup}

\begin{definition}[State Space and Prior]
\label{def:state}
Let $H \in \{0, 1\}$ denote the truth value of a hypothesis, where $H = 1$ means the effect exists and $H = 0$ means it does not. Let $\pi := \Prob(H = 1)$ denote the \emph{prior probability} that a hypothesis selected for investigation is true.
\end{definition}

The prior $\pi$ is not the fraction of all conceivable hypotheses that are true; it is the probability that a hypothesis selected for testing reflects a genuine effect, given the theoretical reasoning, preliminary data, and incentive structures that led to its selection. Clinical trials with Phase~II evidence might have $\pi \approx 0.30$; genome-wide scans, $\pi \approx 10^{-5}$; exploratory social psychology experiments, $\pi \approx 0.05$--$0.15$. The prior is therefore a property of a field's hypothesis-generation process, summarizing how ambitious or conservative its research agenda is.

\begin{definition}[Test Outcomes and Error Rates]
A statistical test produces a binary outcome: significant ($T = 1$) or non-significant ($T = 0$). The test is characterized by the significance level $\alpha := \Prob(T = 1 \mid H = 0)$ and power $1-\beta := \Prob(T = 1 \mid H = 1)$.
\end{definition}

\begin{remark}[Nominal versus Effective Error Rates]
\label{rem:nominal_effective}
We distinguish \emph{nominal} from \emph{effective} error rates. The effective $\aeff$ may exceed the nominal $\alpha$ due to analytical flexibility or selective reporting. When we write $\alpha$ and $1-\beta$ without subscripts, we mean the rates governing the actual acceptance decision $T$; these coincide with nominal rates when no specification search or selective reporting is present. Unless explicitly marked (e.g., $\aeff$, $\Leff$), $\alpha$ and $\beta$ refer to nominal rates. Sections~\ref{sec:framework}--\ref{sec:replication} work with nominal parameters; Section~\ref{sec:collapse} introduces effective rates via specification search and confounding. The PPV identity~\eqref{eq:ppv} applies to whichever rates govern the actual decision process.
\end{remark}

\paragraph{Model Assumptions.}
Throughout the paper we assume $\pi \in (0,1)$, $\alpha \in (0,1)$, $\beta \in [0,1)$, and $\Lambda = (1-\beta)/\alpha \in (0,\infty)$. All probability statements concern the decision rule used to generate the binary claim $T$; that is, $\alpha$ and $1-\beta$ are operating characteristics of the actual testing pipeline. Conditional-independence assumptions will be stated explicitly whenever products or powers of $\Lambda$ are taken.

\begin{definition}[Experimental Leverage]
\label{def:leverage}
The \emph{experimental leverage} of a statistical test is
\[
\Lambda := \frac{1-\beta}{\alpha} = \frac{\text{power}}{\text{false-positive rate}}.
\]
Leverage measures how much more likely a significant result is when the hypothesis is true than when it is false. A test with power 0.80 and $\alpha = 0.05$ has $\Lambda = 16$.
\end{definition}

\begin{remark}[Minimal Discrimination and Sign Reversal]
\label{rem:min_discrimination}
The threshold $\Lambda = 1$ (equivalently, $1-\beta = \alpha$) is the boundary between evidence and anti-evidence for a significant result. From~\eqref{eq:ppv}, if $\Lambda = 1$ then $\PPV = \pi$, so a significant result carries no information beyond the prior. If $\Lambda < 1$, then $\PPV < \pi$, so significance is anti-evidential relative to the prior. The framework remains valid in this regime, but the direction of evidential update reverses. Many threshold results below therefore assume minimal discrimination, $\Lambda > 1$.
\end{remark}

\begin{definition}[Positive Predictive Value]
$\PPV := \Prob(H = 1 \mid T = 1)$.
\end{definition}

\subsection{The Certainty Bound}

\begin{theorem}[Certainty Bound]
\label{thm:certainty_bound}
For any statistical test with leverage $\Lambda$ applied to hypotheses with prior $\pi \in (0,1)$:
\begin{equation}
\label{eq:ppv}
\PPV = \frac{\pi\Lambda}{\pi\Lambda + (1-\pi)}.
\end{equation}
Equivalently, in log-odds form:
\begin{equation}
\label{eq:logodds}
\log\frac{\PPV}{1-\PPV} = \log\frac{\pi}{1-\pi} + \log\Lambda.
\end{equation}
Achieving $\PPV \geq \tau$ requires
\begin{equation}
\label{eq:lambda_req}
\Lambda \geq \Lreq(\tau, \pi) := \frac{\tau}{1-\tau} \cdot \frac{1-\pi}{\pi}.
\end{equation}
\end{theorem}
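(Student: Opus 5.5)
The plan is to apply Bayes' theorem directly to the joint law of $(H,T)$ from Definition~\ref{def:state} and the definition of the error rates, then obtain the log-odds form and the threshold by elementary algebra.

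First I decompose the probability of a significant result by the law of total probability:
\[
\Prob(T=1) = \Prob(T=1\mid H=1)\pi + \Prob(T=1\mid H=0)(1-\pi) = (1-\beta)\pi + \alpha(1-\pi).
\]
This is strictly positive because $\alpha>0$ and $\pi<1$, so conditioning on $T=1$ is well defined. Bayes' theorem then gives $\PPV = (1-\beta)\pi/[(1-\beta)\pi + \alpha(1-\pi)]$. Dividing numerator and denominator by $\alpha>0$ and substituting $\Lambda=(1-\beta)/\alpha$ from Definition~\ref{def:leverage} yields \eqref{eq:ppv}.

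For \eqref{eq:logodds}, I compute $1-\PPV = (1-\pi)/[\pi\Lambda+(1-\pi)]$, so $\PPV/(1-\PPV) = \Lambda\,\pi/(1-\pi)$. This requires $\PPV<1$, which holds since $1-\pi>0$. Taking logarithms is legitimate because $\Lambda>0$ and $\pi\in(0,1)$; in the degenerate case $\beta=1$ excluded by the model assumptions, $\Lambda$ would vanish. This gives prior log-odds plus $\log\Lambda$.

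For \eqref{eq:lambda_req}, I use that $x\mapsto x/(1-x)$ is strictly increasing on $(0,1)$. For $\tau\in(0,1)$, the condition $\PPV\ge\tau$ is therefore equivalent to $\PPV/(1-\PPV)\ge \tau/(1-\tau)$, that is, $\Lambda\pi/(1-\pi)\ge\tau/(1-\tau)$. Rearranging gives $\Lambda\ge\Lreq(\tau,\pi)$, so the condition is both necessary and sufficient. I would note that the statement implicitly takes $\tau\in(0,1)$. At $\tau=1$ the target is infeasible for every finite $\Lambda$, which is consistent with reading $\Lreq=\infty$.

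None of these steps is a real obstacle. The only points needing care are the positivity conditions that justify the division, the conditioning, and the logarithm, together with the monotonicity argument that turns the PPV threshold into a leverage threshold. Both are guaranteed by the stated model assumptions.
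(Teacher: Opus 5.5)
Your proposal is correct and follows the paper's own proof: Bayes' theorem, division by $\alpha$ to introduce $\Lambda$, conversion to log-odds, and solving $\PPV\ge\tau$ for $\Lambda$. Your extra care fills in details the paper leaves implicit, namely the positivity conditions that make the conditioning and logarithms well defined and the strict monotonicity of $x\mapsto x/(1-x)$ that makes the leverage threshold an equivalence.
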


\begin{proof}
By Bayes' theorem: $\PPV = (1-\beta)\pi/[(1-\beta)\pi + \alpha(1-\pi)] = \pi\Lambda/[\pi\Lambda + (1-\pi)]$. Taking log-odds gives~\eqref{eq:logodds}. Setting $\PPV \geq \tau$ and solving yields~\eqref{eq:lambda_req}.
\end{proof}

The log-odds form makes the informational content transparent: once evidence is reduced to a significance decision, the test contributes exactly $\log\Lambda$ to posterior log-odds, and nothing more. Statistical significance is not a property of the evidence itself but of a decision rule. Within this architecture, the maximum evidential contribution of a significance decision is determined by the protocol's leverage. Researcher diligence matters insofar as it changes the operating parameters, including effective error rates, power, identification, or the priors of tested hypotheses, but it cannot extract more from a given binary decision than $\log\Lambda$ provides.

The Certainty Bound is exact for binary decisions ($T \in \{0,1\}$). It does not bound what is attainable using richer evidence summaries (full likelihoods, posterior distributions, or continuous effect-size estimates), which can extract more information from the same data.

The framework links, but does not equate, three quantities: PPV (reliability of significant findings given a binary publication filter), the replication rate under a specified replication design (Section~\ref{sec:replication}), and the latent prior probability $\pi$ of tested hypotheses. Conflating these is a common source of confusion; each enters the analysis in a distinct role.

\begin{example}[Required Leverage]
\label{ex:leverage}
For target $\tau = 0.95$, the odds ratio $\tau/(1-\tau) = 19$:
\begin{center}
\begin{tabular}{ccc}
\toprule
Prior $\pi$ & Prior odds $(1-\pi)/\pi$ & Required $\Lreq$ \\
\midrule
0.50 & 1 & 19 \\
0.10 & 9 & 171 \\
0.01 & 99 & 1{,}881 \\
$10^{-5}$ & 99{,}999 & $1.9 \times 10^6$ \\
\bottomrule
\end{tabular}
\end{center}
Required leverage scales as $O(1/\pi)$ as the prior decreases.
\end{example}

\subsection{The Fixed-$\alpha$ Ceiling}

\begin{theorem}[Fixed-$\alpha$ Ceiling]
\label{thm:fixed_alpha_ceiling}
For any statistical test with fixed significance level $\alpha \in (0,1)$, regardless of sample size:
\begin{equation}
\label{eq:fixed_ceiling}
\PPV \leq \PPV^{\mathrm{ceil}}(\pi, \alpha) := \frac{\pi}{\pi + \alpha(1-\pi)}.
\end{equation}
This bound is tight, achieved in the limit as power $\to 1$.
\end{theorem}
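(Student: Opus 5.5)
The plan is to derive the ceiling directly from the identity~\eqref{eq:ppv} of Theorem~\ref{thm:certainty_bound}, treating $\alpha$ and $\pi$ as fixed and viewing $\PPV$ as a function of power alone. Write $p := 1-\beta \in (0,1]$ under the Model Assumptions. Then $\Lambda = p/\alpha$, and
\[
\PPV(p) = \frac{\pi p}{\pi p + \alpha(1-\pi)}.
\]
The first step is to show that this function is strictly increasing in $p$. Since $\PPV(p) = 1/\bigl(1 + \alpha(1-\pi)/(\pi p)\bigr)$ and $\alpha(1-\pi)/\pi > 0$, the denominator is strictly decreasing in $p$. Equivalently, in the log-odds form~\eqref{eq:logodds} the posterior log-odds equal the prior log-odds plus $\log p - \log\alpha$, which is increasing in $p$.

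The second step uses the constraint $p \le 1$, since power is a probability. Monotonicity then gives $\PPV(p) \le \PPV(1) = \pi/(\pi + \alpha(1-\pi))$, which is~\eqref{eq:fixed_ceiling}. The clause ``regardless of sample size'' follows because sample size enters only through $p$ once $\alpha$ is held fixed. Whatever the sample size, the test is described by some pair $(\alpha, p)$ with $p \in (0,1]$, so the bound covers every design. Equivalently, $\Lambda \le 1/\alpha$, and one could invoke~\eqref{eq:lambda_req} in reverse.

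For tightness I would argue by continuity of $\PPV(p)$ at $p = 1$: as power tends to $1$ (for instance, for a consistent test as $n \to \infty$ at fixed $\alpha$), $\PPV(p) \to \PPV^{\mathrm{ceil}}$. So no smaller constant can serve as a uniform bound. If $\beta = 0$ is permitted, the bound is attained exactly, and otherwise it is the supremum.

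I do not expect any step to be a real obstacle, since this is a one-variable monotonicity argument. The only point needing care is interpretive. I must make explicit that $\alpha$ here is the rate that actually governs the decision, as in Remark~\ref{rem:nominal_effective}. Otherwise ``fixed $\alpha$'' could be evaded by an inflated effective rate, and the bound would then hold with $\aeff$ in place of $\alpha$.
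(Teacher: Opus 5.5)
Your proposal is correct and follows essentially the same route as the paper, which argues that $\PPV$ is strictly increasing in $\Lambda$ and uses $\Lambda = (1-\beta)/\alpha \le 1/\alpha$, with equality as $\beta \to 0$. At fixed $\alpha$, that is the same monotonicity-in-power argument you give, parameterized by $\Lambda$ instead of $p$.
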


\begin{proof}
Since $\PPV$ is strictly increasing in $\Lambda$ (with $\frac{\partial}{\partial\Lambda}\PPV = \pi(1-\pi)/[\pi\Lambda + (1-\pi)]^2 > 0$), and $\Lambda = (1-\beta)/\alpha \leq 1/\alpha$ for $\beta \geq 0$, with equality as $\beta \to 0$: $\PPV^{\mathrm{ceil}} = \pi/[\pi + \alpha(1-\pi)]$.
\end{proof}

\begin{corollary}[Ceiling Values at $\alpha = 0.05$]
\label{cor:ceiling_values}
\begin{center}
\begin{tabular}{ccc}
\toprule
Prior $\pi$ & $\PPV^{\mathrm{ceil}}$ & Interpretation \\
\midrule
0.50 & 95.2\% & Barely meets 95\% target \\
0.30 & 89.6\% & Cannot reach 90\% PPV \\
0.10 & 68.9\% & Cannot exceed 69\% PPV \\
0.05 & 51.3\% & Barely majority-true \\
0.01 & 16.8\% & Predominantly false \\
\bottomrule
\end{tabular}
\end{center}
\end{corollary}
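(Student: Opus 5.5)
The table in Corollary~\ref{cor:ceiling_values} is a direct numerical instantiation of Theorem~\ref{thm:fixed_alpha_ceiling}. I will substitute $\alpha = 0.05$ into the ceiling~\eqref{eq:fixed_ceiling} and evaluate it at each listed prior. It helps to rewrite the ceiling in odds form as
\[
\PPV^{\mathrm{ceil}}(\pi, 0.05) = \frac{20\pi}{20\pi + (1-\pi)} = \frac{20\pi}{1 + 19\pi},
\]
which is~\eqref{eq:ppv} with $\Lambda = 1/\alpha = 20$, the supremum of the leverage under fixed $\alpha$. Each row then reduces to one rational evaluation.

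The key steps, in order, are these. For $\pi = 0.50$ I obtain $10/10.5 \approx 0.952$. For $\pi = 0.30$, $6/6.7 \approx 0.896$. For $\pi = 0.10$, $2/2.9 \approx 0.690$; the entry $68.9\%$ comes from truncating $0.6897$ rather than rounding it. For $\pi = 0.05$, $1/1.95 \approx 0.513$. For $\pi = 0.01$, $0.2/1.19 \approx 0.168$. I will then justify each interpretive column entry by comparing the computed ceiling with the named threshold:
\begin{itemize}
\item $0.952 \geq 0.95$ gives ``barely meets'' the 95\% target.
\item $0.896 < 0.90$ gives ``cannot reach'' 90\% PPV.
\item $0.690 < 0.69$ fails only at the fourth decimal, giving ``cannot exceed'' 69\% PPV.
\item $0.513 > 0.5$ gives ``barely majority-true''.
\item $0.168 < 0.5$ gives ``predominantly false''.
\end{itemize}

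These comparisons become statements about \emph{every} test at $\alpha = 0.05$, whatever its sample size, through the inequality $\PPV \leq \PPV^{\mathrm{ceil}}$ in Theorem~\ref{thm:fixed_alpha_ceiling}. That inequality rests on $\Lambda \leq 1/\alpha$ together with the monotonicity of $\PPV$ in $\Lambda$. No further argument is needed.

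The only delicate point is the $\pi = 0.10$ row. There I will state the exact value $20/29 \approx 0.68966$, so that ``cannot exceed 69\%'' holds strictly and is not an artifact of rounding. I will also note that for $\pi = 0.50$ the ceiling is attained only in the limit as power $\to 1$, so ``meets'' should be read as ``meets asymptotically''.
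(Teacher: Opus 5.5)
Your proposal is correct and follows the paper's route exactly: the corollary is just Theorem~\ref{thm:fixed_alpha_ceiling} evaluated at $\alpha = 0.05$ (equivalently \eqref{eq:ppv} with $\Lambda = 1/\alpha = 20$), and your five evaluations and interpretive comparisons all check out, including your observation that $68.9\%$ is a truncation of $20/29 \approx 0.6897$ rather than a rounding. The only slip is the bullet ``$0.690 < 0.69$'', which is false as written; it should read $20/29 < 0.69$ (since $0.69 \cdot 29 = 20.01 > 20$), as your closing paragraph already says.
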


For pre-reform psychology ($\pi \approx 0.10$, $\alpha = 0.05$), the maximum attainable PPV with any sample size is 69\%, far below the 95\% reliability often assumed in interpretation. The ceiling depends only on $\alpha$ and $\pi$; no increase in sample size can move it.

\subsection{The Critical Prior and the Infeasibility Index}

\begin{definition}[Infeasibility Index]
\label{def:tension}
The \emph{infeasibility index} is
\begin{equation}
\label{eq:psi}
\Psi(\tau, \pi, \alpha, \beta) := \frac{\Lreq(\tau, \pi)}{\Lambda(\alpha, \beta)} = \frac{\tau}{1-\tau} \cdot \frac{1-\pi}{\pi} \cdot \frac{\alpha}{1-\beta}.
\end{equation}
\end{definition}

\begin{theorem}[Critical Prior and Structural Infeasibility]
\label{thm:critical_prior}
For given $\alpha \in (0,1)$, $\beta \in [0,1)$, and target $\tau \in (0,1)$:
\begin{enumerate}[label=(\alph*)]
\item \textbf{Feasibility:} The target PPV is achievable if and only if $\pi \geq \pcrit(\tau, \alpha, \beta)$, where
\begin{equation}
\label{eq:pi_crit}
\pcrit(\tau, \alpha, \beta) = \frac{\tau\alpha}{(1-\tau)(1-\beta) + \tau\alpha}.
\end{equation}
\item \textbf{Infeasibility:} If $\Psi > 1$ (equivalently $\pi < \pcrit$), the target PPV is structurally unattainable at the stated operating parameters. No increase in sample size alone under fixed $\alpha$ can exceed the Fixed-$\alpha$ Ceiling, and no improvement in study conduct that leaves the operating parameters unchanged can overcome the constraint.
\end{enumerate}
\end{theorem}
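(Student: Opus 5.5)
The plan is to reduce both parts to the monotonicity of $\PPV$ established in the proof of Theorem~\ref{thm:certainty_bound} and Theorem~\ref{thm:fixed_alpha_ceiling}, now regarding $\PPV$ as a function of $\pi$ at fixed $(\alpha,\beta)$. From~\eqref{eq:ppv}, $\PPV(\pi)=\pi\Lambda/[\pi\Lambda+(1-\pi)]$. Its log-odds form~\eqref{eq:logodds}, $\log\frac{\pi}{1-\pi}+\log\Lambda$, is strictly increasing in $\pi$ on $(0,1)$. Hence $\PPV$ is strictly increasing in $\pi$, and $\{\pi:\PPV\ge\tau\}$ is an up-set of the form $[\pcrit,1)$.

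\textbf{Part (a).} I would characterize $\pcrit$ by equating posterior log-odds to $\log\frac{\tau}{1-\tau}$. Equivalently, I would use~\eqref{eq:lambda_req}: $\PPV\ge\tau$ iff $\Lambda\ge\Lreq(\tau,\pi)$, i.e.\ iff
\[
\frac{1-\pi}{\pi}\le\frac{1-\tau}{\tau}\cdot\frac{1-\beta}{\alpha}.
\]
Since $(1-\pi)/\pi$ is strictly decreasing in $\pi$, this inequality can be solved for $\pi$ directly. Writing $1/\pi - 1 \le (1-\tau)(1-\beta)/(\tau\alpha)$ gives
\[
\pi\ge\frac{\tau\alpha}{(1-\tau)(1-\beta)+\tau\alpha},
\]
which is~\eqref{eq:pi_crit}. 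Every step is an equivalence, because all quantities are positive under the Model Assumptions: $\tau\in(0,1)$, $\alpha>0$, $1-\beta>0$. This yields the ``if and only if''. I would also note that $\pcrit\in(0,1)$, so the threshold is nondegenerate.

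\textbf{Part (b), the equivalence.} By Definition~\ref{def:tension}, $\Psi=\Lreq/\Lambda$. So $\Psi>1$ iff $\Lambda<\Lreq$, which by the chain of equivalences above holds iff $\pi<\pcrit$. In that case, (a) gives $\PPV<\tau$ at the stated operating parameters.

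\textbf{Part (b), the structural clauses.} I would treat the remaining clauses as consequences of the dependence structure of~\eqref{eq:ppv}. $\PPV$ depends on the design only through $(\pi,\alpha,1-\beta)$. Therefore any change in study conduct that leaves these unchanged leaves $\PPV$ unchanged, and hence below $\tau$.

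For the sample-size clause, I would invoke Theorem~\ref{thm:fixed_alpha_ceiling}. With $\alpha$ fixed, increasing $n$ can affect only power, and $1-\beta\le 1$. So $\PPV\le\PPV^{\mathrm{ceil}}(\pi,\alpha)$, which is exactly the feasibility condition with $\beta=0$. Consequently, if $\pi<\pcrit(\tau,\alpha,0)=\tau\alpha/[(1-\tau)+\tau\alpha]$, no sample size reaches $\tau$. I would also remark that $\pcrit(\tau,\alpha,\beta)$ is increasing in $\beta$, so $\pcrit(\tau,\alpha,0)$ is the smallest threshold attainable by power gains alone.

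The main obstacle is not computational. It is stating precisely what ``structurally unattainable'' asserts. The clean mathematical content is the implication $\pi<\pcrit(\tau,\alpha,\beta)\Rightarrow\PPV<\tau$, together with the ceiling statement. I would phrase the sample-size clause via $\beta\to0$ and the ceiling, rather than claim infeasibility at the current $\beta$ for all $n$.
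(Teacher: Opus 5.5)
Your proof is correct and follows the paper's route: part (a) rewrites $\PPV\ge\tau$ as $\Lambda\ge\Lreq(\tau,\pi)$ and solves for $\pi$, and part (b) combines the definition of $\Psi$, the Fixed-$\alpha$ Ceiling, and the observation that $\PPV$ depends only on $(\pi,\alpha,1-\beta)$. Your reading of the sample-size clause as requiring $\pi<\pcrit(\tau,\alpha,0)$ is the paper's own qualifier (``once the target lies above the ceiling''), stated more explicitly.
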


\begin{proof}
Part (a): By Theorem~\ref{thm:certainty_bound}, $\PPV \geq \tau$ if and only if $\Lambda \geq \Lreq(\tau,\pi)$. Substituting $\Lambda = (1-\beta)/\alpha$ and solving for $\pi$ yields~\eqref{eq:pi_crit}.

Part (b): By Definition~\ref{def:tension}, $\Psi > 1$ if and only if $\Lreq(\tau,\pi) > \Lambda$, equivalently (by part (a)) if and only if $\pi < \pcrit(\tau,\alpha,\beta)$. In that case, $\PPV < \tau$ at the stated operating parameters. Moreover, under fixed $\alpha$, Theorem~\ref{thm:fixed_alpha_ceiling} implies $\PPV \leq \pi/[\pi+\alpha(1-\pi)]$, so increasing sample size alone cannot overcome the constraint once the target lies above the ceiling. Any intervention that leaves $(\pi,\alpha,\beta)$ unchanged also leaves $\Lambda$ unchanged, and therefore cannot change the implied PPV.
\end{proof}

The theorem reframes the question: instead of asking whether PPV is high enough, it asks whether the field prior exceeds $\pcrit$. The locus of analysis shifts from individual study quality to the operating parameters of the research enterprise.

\begin{table}[htbp]
\centering
\caption{Critical prior $\pcrit(\tau, \alpha, \beta)$ for various configurations at target $\tau = 0.95$.}
\label{tab:critical_priors}
\begin{tabular}{cccc}
\toprule
$\alpha$ & Power & $\pcrit$ & Implication \\
\midrule
0.05 & 0.80 & 54.3\% & Majority of tested hypotheses must be true \\
0.05 & 0.50 & 65.5\% & Two in three must be true \\
0.05 & 0.30 & 76.0\% & Three in four must be true \\
0.01 & 0.80 & 19.2\% & One in five must be true \\
0.005 & 0.80 & 10.6\% & One in nine must be true \\
$5\times10^{-8}$ & 0.80 & $1.2\times10^{-6}$ & One in 840{,}000 \\
\bottomrule
\end{tabular}
\end{table}

Three patterns stand out. Reducing $\alpha$ from 0.05 to 0.005 lowers $\pcrit$ from 54\% to 11\%, a fivefold improvement. Reducing power raises $\pcrit$: power of 0.30 rather than 0.80 demands that three-quarters of tested hypotheses be true. The GWAS threshold ($\alpha = 5\times10^{-8}$; see \citealp{peer2008}) reduces $\pcrit$ to about one in 840{,}000, enabling reliable discovery at extremely low priors.

\subsection{Heterogeneity Tax}

\begin{proposition}[Prior Heterogeneity Penalty]
\label{prop:jensen}
Let $\Pi$ be a non-degenerate random variable on $(0,1)$ with mean $\bar\pi$ and variance $\sigma^2_\pi > 0$. For fixed $\Lambda > 1$:
\begin{equation}
\label{eq:jensen}
\E[\PPV(\Pi)] < \PPV(\bar\pi),
\end{equation}
with the gap approximated by $\Lambda(\Lambda-1)\sigma^2_\pi / [\bar\pi(\Lambda-1)+1]^3$. Strict inequality holds when $\Lambda > 1$ and $\Pi$ is non-degenerate.
\end{proposition}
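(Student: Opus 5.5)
Write $f(\pi) := \PPV(\pi) = \pi\Lambda/[\pi\Lambda + (1-\pi)] = \pi\Lambda/[1+(\Lambda-1)\pi]$, which is the identity~\eqref{eq:ppv} from Theorem~\ref{thm:certainty_bound} viewed as a function of the prior with $\Lambda$ held fixed. The plan is to show that $f$ is strictly concave on $(0,1)$ when $\Lambda>1$ and then apply Jensen's inequality in its strict form. The approximation for the gap will come from a second-order Taylor expansion about $\bar\pi$.

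\textbf{Concavity.} Put $D(\pi) := 1+(\Lambda-1)\pi$. Since $\Lambda>0$ and $\pi\in(0,1)$, we have $D(\pi)>0$ throughout. Differentiating gives
\[
f'(\pi) = \frac{\Lambda}{D(\pi)^2}, \qquad f''(\pi) = -\frac{2\Lambda(\Lambda-1)}{D(\pi)^3}.
\]
The first derivative matches the derivative in the proof of Theorem~\ref{thm:fixed_alpha_ceiling}, now taken in $\pi$ rather than $\Lambda$. For $\Lambda>1$ the second derivative is strictly negative on $(0,1)$, so $f$ is strictly concave there. Also $f$ is bounded in $[0,1]$, so $\E[f(\Pi)]$ exists, and $\bar\pi\in(0,1)$ because $\Pi$ takes values in $(0,1)$.

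\textbf{Strict Jensen.} I will prove the strict inequality directly rather than cite it. Strict concavity gives the tangent-line inequality
\[
f(\pi) \le f(\bar\pi) + f'(\bar\pi)(\pi-\bar\pi),
\]
with equality only at $\pi=\bar\pi$. Taking expectations, the linear term vanishes, so $\E[f(\Pi)] \le f(\bar\pi)$. Since $\sigma^2_\pi>0$, the event $\{\Pi\neq\bar\pi\}$ has positive probability, and on that event the inequality is strict. Hence~\eqref{eq:jensen} holds strictly. This is the step needing the most care, but it is standard.

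\textbf{Gap approximation.} Expand $f(\Pi)$ about $\bar\pi$ to second order:
\[
f(\Pi) \approx f(\bar\pi) + f'(\bar\pi)(\Pi-\bar\pi) + \tfrac12 f''(\bar\pi)(\Pi-\bar\pi)^2.
\]
Taking expectations, the first-order term vanishes and the second-order term gives
\[
f(\bar\pi) - \E[f(\Pi)] \approx -\tfrac12 f''(\bar\pi)\,\sigma^2_\pi = \frac{\Lambda(\Lambda-1)\sigma^2_\pi}{[\bar\pi(\Lambda-1)+1]^3}.
\]
This is exactly the stated expression.

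To qualify "approximated" precisely, I would note that the remainder is governed by the third derivative, $f'''(\pi) = 6\Lambda(\Lambda-1)^2/D(\pi)^4$, which is bounded on $[0,1]$. So the error is $O(\E|\Pi-\bar\pi|^3)$, and the approximation is accurate when the dispersion of $\Pi$ is small. As a side remark, when $\Lambda<1$ we get $f''>0$ and the inequality reverses, which is why the statement assumes $\Lambda>1$.
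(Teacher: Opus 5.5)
Your proof is correct and follows essentially the same route as the paper: show $\PPV''(\pi) = -2\Lambda(\Lambda-1)/[\pi(\Lambda-1)+1]^3 < 0$ for $\Lambda>1$, apply strict Jensen, and read off the gap from a second-order Taylor expansion about $\bar\pi$. Your explicit tangent-line argument for strictness and the third-derivative bound on the remainder add rigor that the paper leaves implicit.
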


\begin{proof}
See Appendix~\ref{app:jensen}.
\end{proof}

Heterogeneity imposes a tax on reliability. If a field mixes hypothesis classes with different priors, such as exploratory fishing alongside confirmatory follow-ups, its \emph{average} PPV falls below the PPV implied by its \emph{average} prior, even when leverage is held fixed. The mechanism is concavity, not bias.

The penalty is concrete: at $\Lambda = 16$, $\PPV(0.10) = 0.64$. But mixing $\pi \in \{0.02, 0.18\}$ with equal probability gives average PPV $\approx 0.51$, despite the same mean prior. Concavity penalizes low-$\pi$ observations more than it rewards high-$\pi$ ones, so the low-prior half of a mixed field drags the average down more than the high-prior half lifts it. Fields that adopt standardized protocols and registered reports reduce this tax; those mixing exploratory and confirmatory research pay it in full.

This yields a testable implication: holding leverage approximately fixed, fields with greater prior heterogeneity should exhibit lower replication rates than single-prior calibrations predict.

\begin{remark}[Aggregation Bias]
Treating a discipline like ``Psychology'' as a monolithic block with a single prior $\pi \approx 0.10$ is a simplification; in reality, subfields vary widely. However, because the PPV function is strictly concave (Proposition~\ref{prop:jensen}), this aggregation introduces an optimistic bias. A field composed of a mix of high-reliability and low-reliability subfields will have a lower aggregate replication rate than a homogeneous field operating at the mean parameters. Consequently, the infeasibility results presented here should be interpreted as a \emph{conservative} upper bound on field-wide reliability: the structural reality is likely strictly worse than the aggregated model suggests.
\end{remark}

\begin{figure}[htbp]
\centering
\begin{tikzpicture}
\begin{axis}[
    width=12cm, height=8cm,
    xlabel={Prior Probability $\pi$},
    ylabel={Positive Predictive Value},
    xmin=0, xmax=0.7, ymin=0, ymax=1.05,
    xtick={0, 0.059, 0.106, 0.20, 0.40, 0.543, 0.60},
    xticklabels={0, 0.06, 0.11, 0.20, 0.40, 0.54, 0.60},
    xticklabel style={font=\scriptsize},
    ytick={0, 0.25, 0.50, 0.75, 0.95},
    yticklabels={0, 0.25, 0.50, 0.75, 0.95},
    yticklabel style={font=\scriptsize},
    legend style={at={(0.98,0.35)}, anchor=east, font=\small},
    axis lines=left,
    grid=major,
    clip=false,
]
\fill[red, opacity=0.10] (axis cs:0,0) rectangle (axis cs:0.543,1.05);
\fill[red, opacity=0.25] (axis cs:0,0) rectangle (axis cs:0.059,1.05);
\addplot[thick, blue, domain=0.001:0.7, samples=100]
    {(x*16)/(x*16 + (1-x))};
\addlegendentry{$\alpha=0.05$, power$=0.80$}
\addplot[thick, red, domain=0.001:0.7, samples=100]
    {(x*160)/(x*160 + (1-x))};
\addlegendentry{$\alpha=0.005$, power$=0.80$}
\addplot[thick, black, dashed, domain=0:0.7] {0.95};
\addlegendentry{Target $\PPV = 0.95$}
\addplot[thick, black, dotted, domain=0:0.7] {0.50};
\addlegendentry{Majority-false boundary}
\draw[thick, blue, dotted] (axis cs:0.543,0) -- (axis cs:0.543,0.95);
\draw[thick, red, dotted] (axis cs:0.106,0) -- (axis cs:0.106,0.95);
    \draw[thick, gray!60, dashed] (axis cs:0.02,0.246) -- (axis cs:0.18,0.778);
    \filldraw[gray!60] (axis cs:0.02,0.246) circle (1.5pt);
    \filldraw[gray!60] (axis cs:0.18,0.778) circle (1.5pt);
\node[font=\scriptsize, rotate=90] at (axis cs:0.03, 0.55) {\shortstack{Majority\\false}};
\node[font=\scriptsize] at (axis cs:0.32, 0.08) {Infeasible at $\alpha=0.05$};
\node[font=\scriptsize] at (axis cs:0.62, 0.08) {Feasible};
\end{axis}
\end{tikzpicture}
\caption{PPV as a function of prior probability under two significance thresholds. The darkly shaded region ($\pi < 0.059$) is the majority-false regime at conventional parameters. The lightly shaded region ($0.059 < \pi < 0.543$) is infeasible for target $\tau = 0.95$ at $\alpha = 0.05$. Tightening $\alpha$ to 0.005 shifts the critical prior to 0.106. The chord between $\pi = 0.02$ and $\pi = 0.18$ lies below the curve, visualizing the heterogeneity tax.}
\label{fig:certainty_curve}
\end{figure}

\begin{remark}[Independence]
\label{rem:independence}
Several results below (the Replication Pipeline Theorem, the specification search model, and the Cost of Discovery Theorem) treat studies or repeated tests as statistically independent. This is an idealized benchmark. Dependence across studies (shared samples, correlated hypotheses, or sequential updating) typically weakens leverage multiplication and strengthens the practical force of the infeasibility results.
\end{remark}

\paragraph{Assumption Map.}
For clarity, we distinguish the assumptions used by different results. The Certainty Bound (Theorem~\ref{thm:certainty_bound}) and Fixed-$\alpha$ Ceiling (Theorem~\ref{thm:fixed_alpha_ceiling}) require only binary coarsening and the stated error rates. The Replication Pipeline Theorem (Theorem~\ref{thm:rep_leverage}) and Cost of Discovery (Theorem~\ref{thm:cost}) additionally assume conditional independence of studies given $H$. The Observational Collapse (Theorem~\ref{thm:obs_collapse}) is an asymptotic result in sample size under persistent confounding. The Specification Search Collapse (Proposition~\ref{prop:spec_collapse}) uses sequential independence of specification attempts. The Generational Dynamics (Theorem~\ref{thm:generational_decay}) and Field Lifetime (Theorem~\ref{thm:field_lifetime}) are stylized dynamic models with fixed $\Lambda$ and deterministic transmission.

%==========================================================
% SECTION 3: IMPOSSIBILITY REGIMES
%==========================================================
\section{Two Impossibility Regimes}
\label{sec:majority}

\subsection{The Majority-False Theorem}

\begin{theorem}[Majority-False Threshold]
\label{thm:majority_false}
For a test with leverage $\Lambda$, the majority of significant findings are false ($\PPV < 1/2$) if and only if
\begin{equation}
\label{eq:majority_false}
\pi < \pi_{1/2} := \frac{1}{1 + \Lambda} = \frac{\alpha}{(1-\beta) + \alpha}.
\end{equation}
At $\alpha = 0.05$ and power $0.80$: $\pi_{1/2} = 1/17 \approx 5.9\%$. At $\alpha = 0.05$ and power $0.35$: $\pi_{1/2} = 12.5\%$.
\end{theorem}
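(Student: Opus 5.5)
The plan is to read the threshold directly off the PPV identity of Theorem~\ref{thm:certainty_bound}, which is exact under the standing model assumptions ($\pi\in(0,1)$, $\Lambda\in(0,\infty)$). By~\eqref{eq:ppv}, $\PPV = \pi\Lambda/[\pi\Lambda + (1-\pi)]$, and the denominator is strictly positive. So $\PPV < 1/2$ is equivalent, after clearing the positive denominator, to $2\pi\Lambda < \pi\Lambda + (1-\pi)$. That is, $\pi\Lambda < 1-\pi$, or $\pi(1+\Lambda) < 1$. Since $1+\Lambda>0$, this is equivalent to $\pi < 1/(1+\Lambda)$. Every step is a reversible manipulation by a positive quantity, which gives the ``if and only if''.

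An alternative route is the log-odds form~\eqref{eq:logodds}. Here $\PPV<1/2$ holds iff the posterior log-odds are negative, i.e.\ iff $\log\frac{\pi}{1-\pi} < -\log\Lambda$. This is $\pi/(1-\pi) < 1/\Lambda$, which rearranges to the same condition. I would also note that the result is the special case $\tau = 1/2$ of~\eqref{eq:lambda_req} and of Theorem~\ref{thm:critical_prior}(a). Setting $\tau=1/2$ in~\eqref{eq:pi_crit} gives $\pcrit = \alpha/[(1-\beta)+\alpha]$ directly. The strict inequality is needed because at $\pi=\pi_{1/2}$ we get exactly $\PPV=1/2$.

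To get the second form of $\pi_{1/2}$, substitute $\Lambda=(1-\beta)/\alpha$ and multiply numerator and denominator by $\alpha>0$. This gives $\alpha/[(1-\beta)+\alpha]$.

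The numerical illustrations follow by substitution. At $\alpha=0.05$ and power $0.80$, $\Lambda=16$, so $\pi_{1/2}=1/17\approx 0.059$. At power $0.35$, $\Lambda=7$, so $\pi_{1/2}=1/8=0.125$.

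There is no real obstacle. The only point needing care is keeping track of positivity when clearing denominators, which is what makes the equivalence exact rather than a one-sided bound. The result does not need $\Lambda>1$, because the manipulation uses only $\Lambda>0$.
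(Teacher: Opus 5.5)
Your proof is correct and follows the paper's argument: clearing the positive denominator in \eqref{eq:ppv} gives $\pi\Lambda < 1-\pi$, and solving for $\pi$ gives $\pi < 1/(1+\Lambda)$. Your extra routes via \eqref{eq:logodds} and via $\tau = 1/2$ in Theorem~\ref{thm:critical_prior}(a) are valid, and your numerical checks ($\Lambda = 16$ giving $1/17$, $\Lambda = 7$ giving $1/8$) are right.
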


\begin{proof}
$\PPV < 1/2$ iff $\pi\Lambda < (1-\pi)$ iff $\pi < 1/(1+\Lambda)$.
\end{proof}

In exploratory research, the majority-false threshold is easy to cross. A field in which fewer than roughly one in eight to seventeen hypotheses is genuinely true (depending on power) will produce a literature where the majority of significant findings are false, without any questionable research practices.

\subsection{The Cost of Discovery}

\begin{theorem}[Cost of Discovery]
\label{thm:cost}
In a field with constant parameters $(\pi, \alpha, \beta)$ and independent studies, the long-run expected ratio of false positives to true positives (equivalently, the ratio of expected counts over $N$ studies as $N\to\infty$) is
\begin{equation}
\label{eq:waste}
W = \frac{(1-\pi)\alpha}{\pi(1-\beta)} = \frac{1-\PPV}{\PPV}.
\end{equation}
\end{theorem}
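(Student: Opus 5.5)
The plan is to compute the expected counts of true and false positives directly and then identify their ratio with the odds form of the Certainty Bound (Theorem~\ref{thm:certainty_bound}). I will model the field as $N$ studies. For study $i$, let $H_i$ be its truth value and $T_i$ its test outcome. I treat the pairs $(H_i,T_i)$ as independent and identically distributed, with $\Prob(H_i=1)=\pi$, $\Prob(T_i=1\mid H_i=1)=1-\beta$ and $\Prob(T_i=1\mid H_i=0)=\alpha$. Define the counts
\[
\mathrm{TP}_N=\sum_{i=1}^N \mathbf{1}\{H_i=1,T_i=1\}, \qquad \mathrm{FP}_N=\sum_{i=1}^N \mathbf{1}\{H_i=0,T_i=1\}.
\]

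The first step uses linearity of expectation, which does not even require independence:
\[
\E[\mathrm{TP}_N]=N\pi(1-\beta), \qquad \E[\mathrm{FP}_N]=N(1-\pi)\alpha.
\]
The factor $N$ cancels in the ratio of expected counts, so that ratio equals $W=(1-\pi)\alpha/[\pi(1-\beta)]$ for every $N$. Its limit as $N\to\infty$ is therefore trivially $W$.

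The second step reads the statement's ``long-run ratio'' as the almost-sure limit of $\mathrm{FP}_N/\mathrm{TP}_N$. Here independence is needed. The strong law of large numbers, applied to each indicator sequence, gives $\mathrm{TP}_N/N\to\pi(1-\beta)$ and $\mathrm{FP}_N/N\to(1-\pi)\alpha$ almost surely. The model assumptions $\pi\in(0,1)$ and $\beta<1$ make $\pi(1-\beta)>0$. Hence $\mathrm{TP}_N>0$ for all large $N$, the ratio is eventually well defined, and it converges to $W$.

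The last step is pure algebra. From equation~\eqref{eq:ppv},
\[
\frac{1-\PPV}{\PPV}=\frac{1-\pi}{\pi\Lambda}=\frac{(1-\pi)\alpha}{\pi(1-\beta)}=W,
\]
which is just the odds form~\eqref{eq:logodds} of the Certainty Bound, exponentiated and inverted.

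The only real care point is the second step. The limit of a ratio of random counts is not the ratio of expectations in general. Conditional independence across studies is what licenses passing to the limit through the strong law, which is consistent with the Assumption Map. The random denominator is handled by the positivity of $\pi(1-\beta)$ noted above.
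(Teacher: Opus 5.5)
Your proposal is correct and matches the paper's proof. Both compute the expected counts $N\pi(1-\beta)$ and $N(1-\pi)\alpha$, cancel $N$, and identify the ratio with $(1-\PPV)/\PPV$ by substituting the PPV formula. Your strong-law step, which gives almost-sure convergence of the realized ratio $\mathrm{FP}_N/\mathrm{TP}_N$, is a valid strengthening that the paper does not make. Note that this step actually uses the fact that the pairs $(H_i,T_i)$ are i.i.d.\ across studies, not merely conditional independence given $H$.
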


\begin{proof}
See Appendix~\ref{app:cost}.
\end{proof}

\begin{example}[Scientific Cost Across Fields]
\label{ex:waste}
\begin{center}
\begin{tabular}{lccccc}
\toprule
\textbf{Field} & $\pi$ & Power & $\alpha$ & \textbf{PPV} & \textbf{$W$} \\
\midrule
Candidate genes & 0.02 & 0.50 & 0.05 & 17\% & 4.9 \\
Pre-reform psychology & 0.10 & 0.35 & 0.05 & 44\% & 1.3 \\
Nutritional epidemiology & 0.08 & 0.60 & 0.05 & 51\% & 1.0 \\
Well-powered RCT & 0.30 & 0.80 & 0.05 & 87\% & 0.15 \\
GWAS & $10^{-5}$ & 0.80 & $5\times10^{-8}$ & 99.4\% & 0.006 \\
\bottomrule
\end{tabular}
\end{center}
The candidate gene literature produced nearly 5 false positives for every true positive and required roughly 100 studies per genuine discovery. Pre-reform psychology required approximately 29 studies per genuine discovery.
\end{example}

%==========================================================
% SECTION 4: REPLICATION BRIDGE
%==========================================================
\section{The Replication Bridge}
\label{sec:replication}

\subsection{The Replication Bridge Theorem}

\begin{theorem}[Replication Bridge]
\label{thm:replication_bridge}
Consider an original study with positive predictive value $\PPV_o$ and an independent replication conducted at level $\alpha_r$ with power $1-\beta_r$. The probability of a significant replication given a significant original is:
\begin{equation}
\label{eq:rep_bridge}
\Prob(T_r = 1 \mid T_o = 1) = \PPV_o \cdot (1-\beta_r) + (1 - \PPV_o) \cdot \alpha_r.
\end{equation}
Conversely, given an observed replication rate $R$ and provided $1-\beta_r > \alpha_r$:
\[
\widehat{\PPV} = \frac{R - \alpha_r}{(1-\beta_r) - \alpha_r}.
\]
\end{theorem}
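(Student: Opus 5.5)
The plan is to condition on the latent truth value $H$ and apply the law of total probability, using a conditional-independence assumption to separate the replication outcome from the original outcome. The key modelling assumption, which I would state explicitly in line with the paper's Model Assumptions and Remark~\ref{rem:independence}, is that the replication outcome $T_r$ is conditionally independent of the original outcome $T_o$ given $H$. Concretely, $\Prob(T_r = 1 \mid H, T_o = 1) = \Prob(T_r = 1 \mid H)$. The replication's operating characteristics are then $\Prob(T_r=1\mid H=1) = 1-\beta_r$ and $\Prob(T_r=1\mid H=0)=\alpha_r$.

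First, I would write
\[
\Prob(T_r=1\mid T_o=1) = \sum_{h\in\{0,1\}} \Prob(T_r=1\mid H=h, T_o=1)\,\Prob(H=h\mid T_o=1).
\]
Next, I would identify $\Prob(H=1\mid T_o=1)=\PPV_o$ by definition, with the original's PPV given by Theorem~\ref{thm:certainty_bound}, so that $\Prob(H=0\mid T_o=1)=1-\PPV_o$. Invoking conditional independence then replaces the two conditional replication probabilities by $1-\beta_r$ and $\alpha_r$, which yields~\eqref{eq:rep_bridge} immediately.

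For the converse, I would observe that~\eqref{eq:rep_bridge} is affine in $\PPV_o$:
\[
R = \alpha_r + \PPV_o\bigl[(1-\beta_r)-\alpha_r\bigr].
\]
When $1-\beta_r>\alpha_r$ (equivalently, replication leverage exceeding one, as in Remark~\ref{rem:min_discrimination}), the slope is nonzero and the equation can be solved uniquely for $\PPV_o$, giving $\widehat{\PPV}$.

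The only step requiring care, and the one I expect to be the main obstacle, is justifying the conditional-independence step. If the replication shared data or biases with the original, $T_r$ and $T_o$ would be dependent given $H$, and the formula would fail. I would therefore state this assumption as the precise meaning of ``independent replication.''

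I would also note two caveats. First, $\widehat{\PPV}$ is a method-of-moments inversion: it lies in $[0,1]$ only when $\alpha_r\le R\le 1-\beta_r$, and otherwise it should be truncated. Second, the replication parameters must be those actually governing $T_r$; for example, if ``same direction'' is required, one uses the one-sided $\alpha_r$.
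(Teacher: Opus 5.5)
Your proposal is correct and follows the paper's proof: condition on $H$ via the law of total probability, then use conditional independence of $T_r$ and $T_o$ given $H$ to replace the replication probabilities by $1-\beta_r$ and $\alpha_r$, and solve the resulting affine relation for $\PPV_o$. Your extra caveats are reasonable refinements; the paper states the conditional-independence assumption only briefly after the proof and does not discuss truncating $\widehat{\PPV}$ when $R$ falls outside $[\alpha_r, 1-\beta_r]$.
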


\begin{proof}
By the law of total probability, conditioning on $H$ and using independence of $T_r$ and $T_o$ conditional on $H$:
\begin{align*}
\Prob(T_r = 1 \mid T_o = 1) &= \Prob(H=1 \mid T_o=1)\,\Prob(T_r=1 \mid H=1) + \Prob(H=0 \mid T_o=1)\,\Prob(T_r=1 \mid H=0) \\
&= \PPV_o \cdot (1-\beta_r) + (1 - \PPV_o) \cdot \alpha_r.
\end{align*}
Solving for $\PPV_o$ yields the inversion formula.
\end{proof}

The Replication Bridge assumes independence of original and replication conditional on $H$; shared materials, experimenters, or participant populations may violate this, attenuating realized leverage multiplication.

\subsection{Consistency with the Observed Replication Rate}

\begin{mdframed}
\textbf{Box 1: Parameter Calibration for Pre-Reform Psychology.}

\smallskip
The following estimates were established by independent sources before the Open Science Collaboration's replication project (2015).

\smallskip
\textbf{Statistical power ($1-\beta \approx 0.35$):} \citet{cohen1962} documented low typical power in abnormal-social psychology. In a follow-up historical comparison of the same journal, \citet{sedlmeier1989} reported persistently low median power for a medium effect (0.46 in 1960 versus 0.37 in 1984), with median power as low as 0.25 in cases where nonsignificance was interpreted as confirmation of the null. \citet{button2013} report a median statistical power of about 21\% in neuroscience and summarize a broader range of roughly 8--31\% across analyses and subfields. We adopt $1-\beta = 0.35$ as a central estimate for pre-reform social psychology.

\smallskip
\textbf{Prior probability ($\pi \approx 0.10$, calibration parameter):} The field-level prior is not directly observable and is treated as a calibration parameter summarizing the base rate of genuinely true hypotheses entering formal tests. We use $\pi = 0.10$ as a central value for pre-reform social psychology, consistent with scenarios explored in \citet{ioannidis2005}, with sensitivity analysis over $\pi \in [0.05, 0.20]$.

\smallskip
\textbf{Significance level:} $\alpha = 0.05$, the conventional threshold in much of the literature.

\smallskip
These values imply $\Lambda = 0.35/0.05 = 7$ and $\PPV = 0.10 \times 7/(0.10 \times 7 + 0.90) = 0.44$.
\end{mdframed}

Substituting the parameters from Box~1 into the Replication Bridge, with replication power $1-\beta_r \approx 0.75$ (based on the OSC's design of replications to achieve high power; see \citealp{osc2015}):\footnote{Winner's curse \citep{ioannidis2008} compounds this effect: if replication teams power their studies for the published effect size $\hat\theta > \theta$, realized replication power falls below the intended level. The predicted replication rate of 36--38\% should therefore be interpreted as an approximate upper bound under these parameter assumptions.}

\[
\Prob(\text{rep.\ significant} \mid \text{orig.\ significant}) = 0.44 \times 0.75 + 0.56 \times 0.05 = 0.330 + 0.028 = 0.358.
\]

With $1-\beta_r = 0.80$: $\Prob = 0.44 \times 0.80 + 0.56 \times 0.05 = 0.380$.

The implied replication rate of 36--38\% is consistent with the Open Science Collaboration's observed 36\% \citep{osc2015}. A more precise characterization is \emph{retrodiction}: the framework is applied to parameter ranges established independently, and the implied rate is checked against a subsequently observed value. Bridge inversion provides a complementary check: from $R = 0.36$ and $1-\beta_r = 0.75$, we recover $\widehat{\PPV} = (0.36-0.05)/(0.75-0.05) = 0.44$, consistent with the parameter-derived estimate. This agreement is algebraic consistency within the framework, not independent confirmation; its value lies in showing that the bridge is internally coherent at the observed replication rate.

The retrodiction does not identify a unique parameterization. Multiple $(\pi, 1-\beta)$ combinations can produce similar replication rates (Table~\ref{tab:sensitivity}), and bridge inversion identifies implied PPV, not $\pi$ alone, absent assumptions about original power and effective $\alpha$. The key point is regime-level robustness: across a broad range of independently plausible pre-reform parameters, the field remains below the target-reliability feasibility boundary.

\begin{table}[htbp]
\centering
\caption{Implied replication rates across prior probability and power, at $\alpha = 0.05$, $\alpha_r = 0.05$, $1-\beta_r = 0.75$. The observed OSC rate of 36\% is shown in bold.}
\label{tab:sensitivity}
\begin{tabular}{lccccc}
\toprule
 & \multicolumn{5}{c}{Prior probability $\pi$} \\
\cmidrule(l){2-6}
Power $1-\beta$ & 0.05 & 0.10 & 0.15 & 0.20 & 0.30 \\
\midrule
0.25 & 20\% & 30\% & 38\% & 44\% & 53\% \\
0.35 & 24\% & \textbf{36\%} & 44\% & 50\% & 58\% \\
0.50 & 29\% & 42\% & 50\% & 55\% & 62\% \\
\bottomrule
\end{tabular}
\end{table}

Across all fifteen cells, $\Psi > 1$ for a target of $\tau = 0.95$: the infeasibility finding is robust to substantial parameter uncertainty.

Additional consistency checks are broadly consistent with the framework. \citet{camerer2016} found approximately 61\% replication in laboratory economics; \citet{camerer2018} found approximately 62\% for social science experiments in \emph{Nature} and \emph{Science}. These higher rates are consistent with higher priors or higher leverage. \citet{errington2021} reported replication success rates that varied by criterion in preclinical cancer biology; bridge inversion (assuming $\alpha_r = 0.05$ and replication power $\approx 0.80$) recovers $\widehat{\PPV}$ in the range consistent with the infeasible but majority-true regime.

%==========================================================
% SECTION 5: TWO ROUTES TO COLLAPSE
%==========================================================
\section{Two Routes to Collapse}
\label{sec:collapse}

Sufficient leverage is necessary for reliability, but leverage can be destroyed. Two recurrent features of scientific practice drive leverage downward and can, in the collapse limit, return PPV to the prior: persistent confounding in observational data and analytical flexibility under publication pressure. Both mechanisms can arise under standard disciplinary practice and do not require bad faith. The collapse results are asymptotic and mechanism-specific; they do not imply that all observational research or all specification flexibility collapses to the prior in practice.

\subsection{Route I: Observational Collapse}

\begin{definition}[Causal versus Associational Null]
\label{def:null_types}
Let $H_{\mathrm{causal}} = 0$ denote no causal effect and $H_{\mathrm{assoc}} = 0$ denote no association after adjustment for measured covariates. With unmeasured confounders of magnitude $b > 0$, the causal null does not imply the associational null: a spurious association of magnitude $b$ remains.
\end{definition}

In the theorem below, $z_\alpha$ denotes the upper $\alpha$-tail standard normal critical value, i.e., $\Prob(Z > z_\alpha)=\alpha$ for $Z\sim N(0,1)$.

\begin{theorem}[Observational Collapse]
\label{thm:obs_collapse}
Consider an observational study testing for a causal effect via $Z_n = \hat\theta_n / \mathrm{SE}(\hat\theta_n)$, with one-sided rejection for large positive values (the two-sided analogue is stated in part (d)), under the following conditions:
\begin{enumerate}[label=(\roman*)]
\item Persistent unmeasured confounding: there exists a bias term $b_n$ in the tested direction such that under $H_{\mathrm{causal}}=0$, $Z_n$ is asymptotically $N(\sqrt{n}\,b_n/\sigma, 1)$ for some $\sigma > 0$, with $\sqrt{n}\,b_n \to +\infty$;
\item The identification strategy is not strengthened with $n$ in a way that forces $b_n \to 0$ fast enough to prevent $\sqrt{n}\,b_n \to \infty$;
\item Consistent test under $H_{\mathrm{causal}}=1$ ($1-\beta_n \to 1$).
\end{enumerate}
Then as $n \to \infty$:
\begin{enumerate}[label=(\alph*)]
\item $\aeff(n) \to 1$;
\item $\Lambda_{\mathrm{eff}}(n) \to 1$;
\item $\PPV_n \to \pi$;
\item For two-sided rejection ($|Z_n| > z_{\alpha/2}$), if $\sqrt{n}\,|b_n| \to \infty$, then $\aeff(n) \to 1$ as well.
\end{enumerate}
\end{theorem}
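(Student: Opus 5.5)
The plan is to compute $\aeff(n)$ directly from the asymptotic null distribution in condition (i), then read off (b) and (c) from the definitions and the Certainty Bound (Theorem~\ref{thm:certainty_bound}). Under $H_{\mathrm{causal}}=0$, write $\mu_n := \sqrt{n}\,b_n/\sigma$, so that $Z_n - \mu_n \Rightarrow N(0,1)$ and $\mu_n \to +\infty$. The effective false-positive rate is
\[
\aeff(n) = \Prob(Z_n > z_\alpha \mid H_{\mathrm{causal}}=0) = \Prob\bigl(Z_n - \mu_n > z_\alpha - \mu_n \mid H_{\mathrm{causal}}=0\bigr).
\]
Fix any $M>0$. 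For all large $n$ we have $z_\alpha - \mu_n < -M$, so $\aeff(n) \geq \Prob(Z_n - \mu_n > -M \mid H_{\mathrm{causal}}=0)$. By weak convergence (the limit is continuous), this tends to $\Phi(M)$. Hence $\liminf_n \aeff(n) \geq \Phi(M)$ for every $M$; letting $M\to\infty$ and using $\aeff \leq 1$ gives (a). Condition (ii) is not used in the calculation itself. It only guarantees that (i) remains in force, in the sense that $b_n$ is not driven to zero by design.

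For (b), condition (iii) gives $1-\beta_n \to 1$, so $\Leff(n) = (1-\beta_n)/\aeff(n) \to 1/1 = 1$. For (c), we apply~\eqref{eq:ppv} with the effective rates that actually govern the decision, as permitted by Remark~\ref{rem:nominal_effective}: $\PPV_n = \pi\Leff(n)/[\pi\Leff(n) + 1-\pi]$. The right-hand side is a continuous function of $\Leff$ at $\Leff=1$, so $\PPV_n \to \pi/(\pi + 1-\pi) = \pi$. Note that $\Leff(n)$ is finite for large $n$ because $\aeff(n) > 0$ eventually.

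For (d), we bound $\aeff(n) = \Prob(|Z_n| > z_{\alpha/2}) \geq \Prob(Z_n > z_{\alpha/2})$ when $b_n$ has eventually constant positive sign, or by $\Prob(Z_n < -z_{\alpha/2})$ when the sign is negative. The argument from (a) then applies with $\mu_n \to +\infty$ or $-\infty$ respectively. If the sign of $b_n$ oscillates, we pass to the two sign subsequences, each of which yields the limit 1 by the same bound, so the full sequence converges to 1. This requires only $|\mu_n|\to\infty$. Power in (b) under two-sided testing is handled by (iii) as before.

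The main obstacle is making ``asymptotically $N(\mu_n,1)$'' with a moving mean rigorous. I will interpret condition (i) as $Z_n - \mu_n \Rightarrow N(0,1)$ under the null, which is the reading used above. With that interpretation, the $\Phi(M)$ lower-bound argument avoids any need for uniformity in the normal approximation. A weaker reading, such as a total-variation-free statement about the distribution of $Z_n$ itself, would not make sense when $\mu_n$ diverges. I will state this interpretation explicitly at the start of the proof.
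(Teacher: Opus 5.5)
Your proposal is correct and follows essentially the same route as the paper's proof: compute $\aeff(n)$ from the mean-shifted normal null distribution, obtain (b) from condition (iii), and get (c) from continuity of the Certainty Bound formula in $\Leff$. Your $\Phi(M)$ lower bound (reading (i) as $Z_n - \mu_n \Rightarrow N(0,1)$) and your sign-subsequence argument for (d) make rigorous steps that the paper simply asserts, such as its claim in (d) that $|Z_n| \to \infty$ in probability.
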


\begin{proof}
See Appendix~\ref{app:obs_collapse}.
\end{proof}

Under the stated assumptions, the Observational Collapse is qualitatively worse than the Fixed-$\alpha$ Ceiling. The ceiling for a randomized study at $\pi = 0.10$ and $\alpha = 0.05$ is 69\%; the asymptotic value for a confounded observational study is 10\%, the prior itself. When persistent confounding is present, increasing sample size makes the test progressively less informative about the causal hypothesis. The theorem applies to designs in which unmeasured confounding does not shrink under a fixed identification strategy. If confounding is absent, or if it diminishes through improved design or measurement, the collapse need not occur. If $\sqrt{n}\,b_n \to -\infty$ while testing in the positive direction, then $\aeff(n) \to 0$ rather than~1, so the directionality of confounding relative to the rejection region matters.

\begin{corollary}[Confident Falsehood Under Continuous Estimation]
\label{cor:confident_falsehood}
Under standard regularity conditions for misspecified $M$-estimation (e.g., \citealt{white1982}), if the fitted model omits confounders responsible for bias, then $\hat\theta_n \xrightarrow{p} \theta^\star$, where $\theta^\star$ is the pseudo-true parameter (the Kullback--Leibler minimizer within the misspecified model class); in common confounding settings, $\theta^\star$ coincides with the confounded associational estimand. Moreover, $\mathrm{SE}(\hat\theta_n) \to 0$. Under standard Bayesian misspecification regularity conditions (e.g., \citealt{kleijn2012}), the posterior likewise concentrates near $\theta^\star$. Binary collapse returns the researcher to uncertainty (the prior); continuous-estimation collapse can instead produce high confidence around a structurally incorrect estimand.
\end{corollary}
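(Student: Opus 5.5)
The plan is to split the statement into three pieces and handle each by citing the relevant classical result, checking only that its hypotheses hold in the confounded setting of Theorem~\ref{thm:obs_collapse}. The pieces are consistency of $\hat\theta_n$ for a pseudo-true value, vanishing standard error, and posterior concentration.

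\textbf{Consistency.} I would write the fitted model as a parametric family $\{f_\theta\}$ that omits the confounder. The data come from a true joint law $g$ in which the confounder shifts the outcome–exposure relation. The estimator $\hat\theta_n$ maximizes $n^{-1}\sum_i \log f_\theta(X_i)$. Under White's conditions (compact parameter space or suitable convexity, measurability, a dominating integrable envelope, and a unique maximizer), the uniform law of large numbers gives
\[
n^{-1}\sum_i \log f_\theta(X_i) \to \E_g[\log f_\theta(X)]
\]
uniformly in $\theta$. The argmax theorem then yields $\hat\theta_n \xrightarrow{p} \theta^\star$, where $\theta^\star$ maximizes $\E_g\log f_\theta$, i.e.\ minimizes $\mathrm{KL}(g\,\|\,f_\theta)$.

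To connect $\theta^\star$ with the associational estimand, I would take the leading case of a linear working model estimated by least squares with a Gaussian working likelihood. There the KL minimizer is the population projection coefficient, $\theta^\star = \theta + b$. Here $b$ is the omitted-variable bias, namely the confounder's effect times the regression of the confounder on the exposure. This is exactly the quantity whose persistence drives condition (i) of Theorem~\ref{thm:obs_collapse}. The claim in the corollary says only that the two coincide ``in common confounding settings'', so exhibiting this case suffices.

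\textbf{Vanishing standard error.} White also gives asymptotic normality: $\sqrt n(\hat\theta_n-\theta^\star)$ converges to $N(0, A^{-1}BA^{-1})$, where $A$ is the expected Hessian and $B$ is the outer-product matrix, both evaluated at $\theta^\star$. I would assume $A$ is nonsingular and $B$ finite. Then any consistent variance estimator, sandwich or naive model-based, gives $\mathrm{SE}(\hat\theta_n)=O_p(n^{-1/2})\to 0$. The naive estimator is consistent for the wrong asymptotic variance, but it is still of order $n^{-1/2}$, so it too tends to zero.

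\textbf{Posterior concentration.} I would invoke Kleijn and van der Vaart's misspecified Bernstein--von Mises theorem. It requires a prior with positive continuous density at $\theta^\star$, local asymptotic normality of the misspecified likelihood around $\theta^\star$, and a testability or uniform-consistency condition. These give total-variation convergence of the posterior to $N(\hat\theta_n, (nV_{\theta^\star})^{-1})$, which concentrates at $\theta^\star$ at rate $n^{-1/2}$.

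The closing contrast needs no new argument. Theorem~\ref{thm:obs_collapse}(c) gives $\PPV_n\to\pi$. The results above give a point mass at $\theta^\star\neq\theta$ whenever $b\neq 0$, so the interval $\hat\theta_n \pm z\,\mathrm{SE}$ excludes the causal $\theta$ with probability tending to one.

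\textbf{Main obstacle.} The difficult step is matching hypotheses. Condition (i) of Theorem~\ref{thm:obs_collapse} allows a sequence $b_n$, whereas White's and Kleijn's results assume a fixed data-generating law with fixed pseudo-true value. I would resolve this by stating the corollary for fixed $b>0$, which is a special case of (i). I would add a remark that a drifting $b_n$ with $\sqrt n\, b_n\to\infty$ and $b_n$ bounded away from zero gives the same conclusion along subsequences, and otherwise leave the identification of the KL minimizer as a model-specific check.
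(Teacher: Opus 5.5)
Your proposal is correct and follows the same route as the paper. The paper gives no separate proof and rests the corollary on two cited results: White's misspecified quasi-maximum-likelihood theory (consistency for the KL minimizer, $O_p(n^{-1/2})$ standard errors) and Kleijn and van der Vaart's misspecified Bernstein--von Mises theorem. Your omitted-variable-bias computation in the Gaussian linear working model, giving $\theta^\star=\theta+b$, usefully supplies the ``common confounding settings'' clause that the paper only asserts, and restricting to a fixed bias $b\neq 0$ is the appropriate reading of the corollary.
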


This asymmetry helps explain an otherwise puzzling feature of certain research traditions: large observational studies converge on precise estimates that experimental evidence later contradicts \citep{ioannidis2018}. The literature accumulates tight confidence intervals centered on the confounding bias rather than honest uncertainty. Sample size reduces uncertainty about $\theta^\star$, not about the causal effect $\theta$.

\begin{corollary}[Identification as Escape]
\label{cor:identification_escape}
Designs achieving valid identification (randomized experiments, instrumental variables, regression discontinuity, difference-in-differences when identifying assumptions hold) eliminate or bound $b$, restoring nominal leverage. Under the assumptions of Theorem~\ref{thm:obs_collapse}, and absent a mechanism that forces $b_n \to 0$ sufficiently fast as $n \to \infty$, valid identification (or an equivalent bias-shrinking mechanism) is necessary for asymptotically reliable causal inference in this framework.
\end{corollary}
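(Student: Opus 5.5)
The plan is to prove Corollary~\ref{cor:identification_escape} in two parts, matching its two claims: a sufficiency direction (valid identification restores nominal leverage) and a necessity direction (without identification or some bias-shrinking mechanism, asymptotic reliability fails). Both parts work through the asymptotic distribution of $Z_n$ under $H_{\mathrm{causal}}=0$, and both rest on Theorem~\ref{thm:obs_collapse}.

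\textbf{Sufficiency.} First I will formalize ``valid identification'' in the setting of Theorem~\ref{thm:obs_collapse}. I will take it to mean that under $H_{\mathrm{causal}}=0$ the bias satisfies $\sqrt{n}\,b_n \to 0$, which covers exact elimination $b_n = 0$ as in randomization. It could instead mean $\sqrt{n}\,b_n \to c$ for some finite $c$; this is the ``bounded'' case.

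In the $\sqrt{n}\,b_n \to 0$ case, $Z_n$ is asymptotically $N(0,1)$ under the causal null. Hence $\aeff(n) = \Prob(Z_n > z_\alpha \mid H=0) \to \alpha$. With condition~(iii), $1-\beta_n \to 1$, so $\Leff(n) \to 1/\alpha$. Since $\PPV$ is continuous and increasing in $\Lambda$ (Theorem~\ref{thm:certainty_bound}), this gives $\PPV_n \to \pi/[\pi+\alpha(1-\pi)]$. That is the Fixed-$\alpha$ Ceiling of Theorem~\ref{thm:fixed_alpha_ceiling}, i.e.\ nominal leverage is restored.

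In the bounded case, $\aeff(n) \to 1-\Phi(z_\alpha - c/\sigma) < 1$. Leverage therefore converges to a finite limit strictly greater than $1$, so the test remains informative even though it is not at nominal level.

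\textbf{Necessity.} This is the contrapositive of Theorem~\ref{thm:obs_collapse}. Suppose no mechanism forces $b_n \to 0$ fast enough. Under the framework's persistence assumption this means $\sqrt{n}\,b_n \to +\infty$ in the tested direction, or $\sqrt{n}\,|b_n|\to\infty$ for two-sided tests. Parts~(a)--(d) of the theorem then give $\Leff \to 1$ and $\PPV_n \to \pi$. Since $\PPV_n \to \pi < \tau$ for any target $\tau > \pi$, asymptotically reliable causal inference is impossible. So any design achieving it must prevent this divergence, which is exactly the content of valid identification or an equivalent bias-shrinking mechanism.

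\textbf{Main obstacle.} The delicate step is the necessity direction, because the corollary's hypothesis dichotomy (``forces $b_n\to0$ sufficiently fast'' versus not) is not exhaustive as stated. The sequence $\sqrt{n}\,b_n$ could oscillate or stay bounded without converging, and then neither branch applies cleanly. My first approach is to define ``sufficiently fast'' as $\limsup \sqrt{n}\,|b_n| < \infty$. I will then argue along subsequences: if $\sqrt{n}\,|b_n|$ is unbounded, extract a subsequence on which it diverges and apply Theorem~\ref{thm:obs_collapse} there. This shows $\PPV_n$ does not converge to any value above $\pi$, so reliability is not asymptotically guaranteed. Conversely, boundedness yields leverage bounded away from $1$, by the sufficiency computation carried out along subsequences. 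This makes ``necessary'' precise within the framework's stylized model, rather than as a claim about all conceivable designs.
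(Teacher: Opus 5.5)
Your main line matches the paper, which states this corollary without a separate proof as an immediate consequence of Theorem~\ref{thm:obs_collapse}. Once the bias is removed, the nominal rates govern $T$; under (i)--(iii), $\Leff \to 1$ and $\PPV_n \to \pi$. Your sufficiency computation and your use of the contrapositive are fine. The gap is in the refinement you propose for the ``main obstacle.''

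You define ``sufficiently fast'' as $\limsup \sqrt{n}\,|b_n| < \infty$. When $\sqrt{n}\,|b_n|$ is unbounded, you apply Theorem~\ref{thm:obs_collapse} along a subsequence on which it diverges. For one-sided rejection this step fails, because the theorem requires divergence \emph{in the tested direction}. Along a subsequence with $\sqrt{n}\,b_n \to -\infty$ one gets $\aeff(n) = \Prob(Z_n > z_\alpha \mid H_{\mathrm{causal}}=0) \to 0$, not $1$; the paper notes this right after the theorem. Together with (iii), this gives $\Leff \to \infty$ and $\PPV_n \to 1$ along that subsequence.

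So under your absolute-value definition the necessity claim is false in the stylized model. Take a non-vanishing bias pointing away from the rejection region, small enough that consistency under $H_{\mathrm{causal}}=1$ survives. That design is asymptotically reliable without identification.

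The repair is to use the signed quantity for one-sided tests. The collapse branch is $\limsup \sqrt{n}\,b_n = +\infty$, which your subsequence argument then handles correctly. The complementary branch is $\limsup \sqrt{n}\,b_n < \infty$. For any $C$ exceeding that limsup, it gives $\limsup \aeff(n) \le 1-\Phi(z_\alpha - C/\sigma) < 1$, hence leverage bounded away from $1$. Reserve $\sqrt{n}\,|b_n|$ for the two-sided case covered by part~(d).

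This also makes explicit something the corollary leaves implicit. Its phrase ``forces $b_n \to 0$'' presupposes, via hypothesis~(i), that the bias points into the rejection region.
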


\subsection{Route II: Specification Search Collapse}

\begin{definition}[Sequential Specification Search]
\label{def:spec_search}
A researcher has $m \geq 1$ legitimate analytical specifications. Each specification is tested in turn; if significant, the result is reported and the search stops; if non-significant, the null result is published with probability $q \in [0,1]$ or the researcher proceeds to the next specification. The parameter $q$ governs selective continuation: $q = 0$ means never publish a null before exhausting the search; $q = 1$ means no selective continuation.
\end{definition}

\begin{definition}[Effective Error Rates under Specification Search]
\label{def:effective_rates}
Define $\aeff(q,m) := \Prob(\text{a significant result is published} \mid H = 0)$ and $1 - \beta_{\mathrm{eff}}(q,m) := \Prob(\text{a significant result is published} \mid H = 1)$.
\end{definition}

\begin{lemma}[Effective Error Rates under Specification Search]
\label{lem:effective_errors}
Under sequential search with $m$ independent tests at nominal rates $(\alpha, 1-\beta)$, defining $s_0 = (1-\alpha)(1-q)$ and $s_1 = \beta(1-q)$:
\begin{align}
\aeff(q,m) &= \alpha \cdot \frac{1-s_0^m}{1-s_0}, \label{eq:alpha_eff_pub}\\
1 - \beta_{\mathrm{eff}}(q,m) &= (1-\beta) \cdot \frac{1-s_1^m}{1-s_1}. \label{eq:power_eff_pub}
\end{align}
\end{lemma}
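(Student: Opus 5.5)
Fix $H=h$ and decompose the event ``a significant result is published'' according to the index $k\in\{1,\dots,m\}$ of the specification at which the search stops with a significant result. Conditional independence of the $m$ specification tests given $H$ gives a closed form for each piece, and summing the pieces gives the two formulas.

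\textbf{Event for stopping at step $k$.} Under Definition~\ref{def:spec_search}, the search reaches specification $k$ with a significant result exactly when two things happen:
\begin{itemize}[label={}]
\item each of the first $k-1$ specifications is non-significant and is followed by a decision to continue rather than publish the null;
\item the $k$th specification is significant.
\end{itemize}
I model the publish-or-continue choice after each null as an independent Bernoulli draw. It is made with probability $q$ to publish and $1-q$ to continue, independently of the test outcomes and of $H$.

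\textbf{Per-step probability.} Given $H=0$, a single step ``non-significant then continue'' has probability $(1-\alpha)(1-q)=s_0$. So the stopping-at-$k$ event has probability $s_0^{k-1}\alpha$. Given $H=1$, the same step has probability $\beta(1-q)=s_1$, so the event has probability $s_1^{k-1}(1-\beta)$.

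\textbf{Summation.} The events for different $k$ are disjoint, because the search stops at the first published result. Summing over $k=1,\dots,m$ gives
\[
\aeff(q,m)=\alpha\sum_{k=0}^{m-1}s_0^{k}, \qquad 1-\beta_{\mathrm{eff}}(q,m)=(1-\beta)\sum_{k=0}^{m-1}s_1^{k}.
\]
The finite geometric-series formula then yields \eqref{eq:alpha_eff_pub} and \eqref{eq:power_eff_pub}. No significant result is published if the search ends by publishing a null or by exhausting all $m$ specifications; that remaining probability mass is simply excluded from the sum.

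\textbf{Degenerate cases.} The closed form needs $s_i\neq 1$. Under the model assumptions $\alpha\in(0,1)$ we have $s_0\le 1-\alpha<1$. For $s_1$, if $\beta=1$ were allowed and $q=0$, then $s_1=1$ and the ratio should be read as its limiting value $m$. But $\beta\in[0,1)$ is assumed, so $s_1<1$ as well. The case $q=1$ gives $s_0=s_1=0$ and recovers the nominal rates, which is a useful sanity check.

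\textbf{Main obstacle.} There is little real difficulty here. The step that needs care is stating precisely what the definition leaves implicit: independence of the $m$ tests conditional on $H$, independence of the continuation decisions, and the fact that the continue/publish choice occurs after every null, including after the $m$th. The decision after the $m$th null does not affect whether a significant result is published, so the formulas hold either way.
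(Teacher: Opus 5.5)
Your proof is correct and matches the paper's argument: you decompose by the attempt $k$ at which the first significant result occurs, which has probability $\alpha s_0^{k-1}$ under $H=0$ and $(1-\beta)s_1^{k-1}$ under $H=1$, and then sum the finite geometric series over $k=1,\dots,m$. You also make explicit two points the paper's one-line proof leaves implicit: that the publish/continue draws are independent of the test outcomes, and that $s_0,s_1<1$ under the model assumptions.
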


\begin{proof}
See Appendix~\ref{app:spec_search}.
\end{proof}

These formulas are exact under the independence benchmark. Independence is used only to obtain closed forms; the collapse endpoints (Proposition~\ref{prop:spec_collapse}) require only that the search can make $\aeff \to 1$ as $m \to \infty$ under $q = 0$. Dependence across specifications may attenuate or amplify the inflation, depending on the correlation structure. Related behavioral models of analytical flexibility include \citet{simmons2011} and the ``garden of forking paths'' analysis of \citet{gelman2014}.

\begin{theorem}[Discrimination Loss under Specification Search]
\label{thm:ppv_adj}
Under specification search with parameters $(q,m)$, the effective leverage is $\Lambda_{\mathrm{eff}}^{\mathrm{pb}} = \Lambda \cdot D(q,m)$, where the superscript ``pb'' denotes the publication-bias setting, and the discrimination loss factor is
\begin{equation}
\label{eq:degradation}
D(q,m) := \frac{1-s_1^m}{1-s_0^m} \cdot \frac{1-s_0}{1-s_1} < 1
\end{equation}
for all $q < 1$, $m \geq 2$, and $1-\beta > \alpha$.
\end{theorem}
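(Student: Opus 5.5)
The identity $\Lambda_{\mathrm{eff}}^{\mathrm{pb}} = \Lambda\cdot D(q,m)$ follows from Lemma~\ref{lem:effective_errors}. I will divide \eqref{eq:power_eff_pub} by \eqref{eq:alpha_eff_pub}. This gives
\[
\Lambda_{\mathrm{eff}}^{\mathrm{pb}} = \frac{1-\beta_{\mathrm{eff}}}{\aeff} = \frac{1-\beta}{\alpha}\cdot\frac{(1-s_1^m)/(1-s_1)}{(1-s_0^m)/(1-s_0)},
\]
which rearranges to $\Lambda\cdot D(q,m)$ as in \eqref{eq:degradation}. The only content left is the strict inequality $D<1$.

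For that, I write $g_m(s) := (1-s^m)/(1-s) = 1+s+\dots+s^{m-1}$ for $s\in[0,1)$, so that $D = g_m(s_1)/g_m(s_0)$. The claim then becomes $g_m(s_1) < g_m(s_0)$. Since $m\ge 2$, $g_m$ is a polynomial with nonnegative coefficients and positive linear coefficient. It is therefore strictly increasing on $[0,1)$, and it suffices to show $s_1 < s_0$.

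Because $q<1$, the factor $1-q$ is positive. Hence $s_1<s_0$ is equivalent to $\beta < 1-\alpha$, which is exactly the hypothesis $1-\beta>\alpha$. This gives $D<1$.

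There is no real obstacle. The points needing care are:
\begin{itemize}
\item The denominators $1-s_0$ and $1-s_1$ must be nonzero. This holds because $\alpha>0$ gives $s_0<1$, and $\beta<1$ gives $s_1<1$.
\item The role of $m\ge2$: for $m=1$ we get $g_1\equiv 1$, so $D=1$ and monotonicity is not strict.
\item The role of $q<1$: for $q=1$ both $s_0$ and $s_1$ vanish, which also gives $D=1$.
\end{itemize}
I will note both boundary cases to explain why the hypotheses are sharp.
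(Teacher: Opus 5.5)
Your proof is correct and matches the paper's argument: you divide the two formulas of Lemma~\ref{lem:effective_errors} to get $\Lambda\cdot D(q,m)$, write $D = h(s_1)/h(s_0)$ with $h(s)=\sum_{j=0}^{m-1}s^j$ strictly increasing, and get $s_1<s_0$ from $1-\beta>\alpha$ and $q<1$. Your extra checks on the nonzero denominators and on the boundary cases $m=1$ and $q=1$ (both giving $D=1$) are accurate, and the paper leaves them implicit.
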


\begin{proof}
See Appendix~\ref{app:disc_loss}.
\end{proof}

\begin{proposition}[Saturation and Collapse]
\label{prop:spec_collapse}
Assume $\Lambda > 1$.
\begin{enumerate}[label=(\alph*)]
\item \textbf{Saturation} (fixed $q \in (0,1)$, $m \to \infty$): Leverage saturates at $\Lambda_{\mathrm{sat}}(q) < \Lambda$; PPV converges to a value strictly above $\pi$ but below the no-bias PPV.
\item \textbf{Unbounded-search collapse} ($q = 0$, $m \to \infty$): $\Lambda_{\mathrm{eff}}^{\mathrm{pb}} \to 1$ and $\PPV \to \pi$.
\end{enumerate}
\end{proposition}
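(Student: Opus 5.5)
The plan is to work directly from the closed forms in Lemma~\ref{lem:effective_errors} and the factorization $\Lambda_{\mathrm{eff}}^{\mathrm{pb}} = \Lambda\, D(q,m)$ of Theorem~\ref{thm:ppv_adj}, take $m\to\infty$ in each regime, and then pass the limiting leverage through the Certainty Bound~\eqref{eq:ppv}. Since $\PPV$ is a continuous, strictly increasing function of leverage, it suffices to identify the limit of $\Lambda_{\mathrm{eff}}^{\mathrm{pb}}$ and compare it with $1$ and with $\Lambda$.

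For part (a), fix $q\in(0,1)$. Then $s_0=(1-\alpha)(1-q)\in(0,1)$ and $s_1=\beta(1-q)\in[0,1)$, so $s_0^m, s_1^m\to 0$. Hence
\[
\aeff \to \frac{\alpha}{1-s_0}, \qquad 1-\beta_{\mathrm{eff}} \to \frac{1-\beta}{1-s_1},
\]
and
\[
\Lambda_{\mathrm{sat}}(q)=\Lambda\,\frac{1-s_0}{1-s_1}.
\]
Next I compare this limit with $1$ and with $\Lambda$. The condition $\Lambda>1$ means $1-\beta>\alpha$, i.e. $\beta<1-\alpha$, so $s_1<s_0$ and $(1-s_0)/(1-s_1)<1$, which gives $\Lambda_{\mathrm{sat}}<\Lambda$. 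For the lower bound I would write
\[
\Lambda_{\mathrm{sat}}=\frac{1-\beta}{1-\beta(1-q)}\cdot\frac{1-(1-\alpha)(1-q)}{\alpha}=\frac{(1-\beta)(q+\alpha-\alpha q)}{\alpha(q+\beta-\beta q)}.
\]
Cross-multiplying, $\Lambda_{\mathrm{sat}}>1$ reduces to $(1-\beta)(q+\alpha-\alpha q) > \alpha(q+\beta-\beta q)$. Expanding, this simplifies to $q(1-\beta-\alpha)>0$, which holds because $q>0$ and $\Lambda>1$. Strict monotonicity of $\PPV$ in leverage (from the proof of Theorem~\ref{thm:fixed_alpha_ceiling}) then places the limiting PPV strictly between $\pi$ and the no-bias value $\pi\Lambda/[\pi\Lambda+1-\pi]$. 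Continuity of~\eqref{eq:ppv} in $\Lambda$ justifies the passage to the limit.

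For part (b), set $q=0$. Then $s_0=1-\alpha$, so $\aeff=1-(1-\alpha)^m\to 1$. Likewise $s_1=\beta$, so $1-\beta_{\mathrm{eff}}=1-\beta^m\to 1$; if $\beta=0$ this equals $1$ for every $m$. The effective leverage is the ratio $(1-\beta^m)/(1-(1-\alpha)^m)$, which tends to $1$ because both numerator and denominator tend to $1$ and the denominator stays bounded away from zero. Substituting $\Lambda=1$ into~\eqref{eq:ppv} gives $\PPV\to\pi$, matching Remark~\ref{rem:min_discrimination}.

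The main obstacle is the strict lower bound $\Lambda_{\mathrm{sat}}>1$ in part (a). The rest is taking limits of geometric sums, whereas this step needs the explicit algebraic reduction above. I would also check the edge case $\beta=0$ (so $s_1=0$); there the formulas still hold and the inequality still reduces to $q(1-\alpha)>0$.
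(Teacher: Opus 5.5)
Your proposal is correct and follows the paper's proof essentially step for step: let $m\to\infty$ in the closed forms of Lemma~\ref{lem:effective_errors}, identify $\Lambda_{\mathrm{sat}}=\Lambda(1-s_0)/(1-s_1)$, get both strict bounds from $s_1<s_0$ and $q[(1-\beta)-\alpha]>0$, and in part (b) let both effective rates tend to $1$; your explicit handling of $\beta=0$ and of strict monotonicity of PPV is slightly more careful than the paper's. One slip to fix: since $1-s_1=1-\beta(1-q)$, the denominator in your displayed formula should be $\alpha(1-\beta+\beta q)$, not $\alpha(q+\beta-\beta q)$, and only the corrected expression reduces under cross-multiplication to $q(1-\beta-\alpha)>0$ (and to $q(1-\alpha)>0$ when $\beta=0$), as you assert.
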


\begin{proof}
See Appendix~\ref{app:saturation}.
\end{proof}

Most real fields likely operate in the saturation regime ($q > 0$, finite $m$) rather than the literal collapse limit. Collapse is a limiting case that reveals the endpoint of the mechanism; saturation alone can place a field in the infeasible regime by depressing $\Leff$ below the required threshold.

\subsection{The Pre-Registration Corollary}

\begin{corollary}[Pre-Registration]
\label{cor:prereg}
Under idealized compliance, pre-registration binds the primary analysis to a single confirmatory specification ($m=1$), restoring $\aeff = \alpha$ regardless of $q$. This eliminates the coupling between publication pressure and specification multiplicity, making PPV of positive findings invariant to file-drawer intensity. In practice, partial compliance or multiple pre-specified analyses can be represented by $m>1$ with constrained search.
\end{corollary}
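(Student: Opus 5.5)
The plan is to specialize Lemma~\ref{lem:effective_errors} to $m=1$ and read off the consequences. With a single confirmatory specification the sequential search of Definition~\ref{def:spec_search} consists of one test. A significant result is reported. A non-significant result is either published as a null (probability $q$) or the process ends, since no further specification exists to try. In either case no further significant result can be generated. Substituting $m=1$ into \eqref{eq:alpha_eff_pub} makes the geometric factor $(1-s_0^1)/(1-s_0)$ equal to $1$, so $\aeff(q,1)=\alpha$. Likewise \eqref{eq:power_eff_pub} gives $1-\beta_{\mathrm{eff}}(q,1)=1-\beta$. Both identities hold for every $q\in[0,1]$, because $s_0$ and $s_1$ (the only places $q$ enters) cancel out of the ratio. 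As a sanity check I will also verify this directly from Definition~\ref{def:effective_rates}, without invoking independence: $\Prob(\text{significant published}\mid H=h)$ is just the probability that the single pre-registered test rejects, which is $\alpha$ for $h=0$ and $1-\beta$ for $h=1$.

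Next, the effective leverage. Theorem~\ref{thm:ppv_adj} gives $\Lambda_{\mathrm{eff}}^{\mathrm{pb}}=\Lambda\cdot D(q,1)$, and $D(q,1)=\frac{1-s_1}{1-s_0}\cdot\frac{1-s_0}{1-s_1}=1$. Hence $\Lambda_{\mathrm{eff}}^{\mathrm{pb}}=\Lambda$. By Theorem~\ref{thm:certainty_bound}, applied to the rates governing the actual decision as Remark~\ref{rem:nominal_effective} permits, the PPV of published significant findings is $\pi\Lambda/[\pi\Lambda+(1-\pi)]$. This expression does not involve $q$, which is exactly the claimed invariance to file-drawer intensity. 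I will also state the ``decoupling'' claim precisely. For $m\ge 2$ and $q<1$, Theorem~\ref{thm:ppv_adj} gives $D<1$, and $D$ depends on $q$. Fixing $m=1$ removes that dependence, so the interaction between $q$ (publication pressure) and $m$ (specification multiplicity) disappears.

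The main obstacle is not algebraic. It is the word ``idealized compliance,'' which has to be formalized so that the pre-registered test really is the decision rule $T$. I will make this an explicit modeling assumption: the reported primary result is the outcome of the single pre-specified specification, with no undisclosed alternatives. The final sentence of the statement (partial compliance) is interpretive rather than something to prove. I will handle it by noting that any residual flexibility over $k$ pre-specified analyses is again an instance of Definition~\ref{def:spec_search} with $m=k$. The effective rates then follow from Lemma~\ref{lem:effective_errors} and are bounded by the $m=k$ values, which is finite, so Proposition~\ref{prop:spec_collapse}(b) cannot drive the search to collapse.
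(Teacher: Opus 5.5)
Your proposal is correct and follows the same route as the paper, which gives no separate proof and reads the corollary off Lemma~\ref{lem:effective_errors} at $m=1$. Since $s_0,s_1<1$, both geometric factors equal $1$, so $\aeff=\alpha$, $1-\beta_{\mathrm{eff}}=1-\beta$ and $D(q,1)=1$; Theorem~\ref{thm:certainty_bound} then gives a PPV with no $q$-dependence, and your direct check from Definition~\ref{def:effective_rates} and your finite-$m$ reading of partial compliance are sound additions.
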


\subsection{The Double Collapse}
\label{subsec:double_collapse}

\begin{theorem}[Double Collapse]
\label{thm:double_collapse}
Suppose observational confounding and sequential specification search are both present, with specification-search parameters $(m,q)$ fixed in $n$.
\begin{enumerate}[label=(\alph*)]
\item If confounding drives the effective null rejection rate to one, $\aeff(n,m,q)\to 1$, and the effective power under $H=1$ remains consistent, $1-\beta_{\mathrm{eff}}(n,m,q)\to 1$, then $\Leff(n,m,q)\to 1$ as $n\to\infty$.
\item Unbounded search alone ($b=0$, $q=0$, $m\to\infty$) drives $\Leff\to 1$.
\end{enumerate}
\end{theorem}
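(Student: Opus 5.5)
Part (a) is essentially a limit of a ratio. I will write the effective leverage as
\[
\Leff(n,m,q) = \frac{1-\beta_{\mathrm{eff}}(n,m,q)}{\aeff(n,m,q)}.
\]
By hypothesis the numerator tends to $1$ and the denominator tends to $1$. The denominator is bounded away from zero for large $n$, so the quotient of limits gives $\Leff \to 1$.

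I will then add the consequence via Theorem~\ref{thm:certainty_bound}. Since $\PPV$ is a continuous function of $\Lambda$ on $(0,\infty)$, this limit gives $\PPV \to \pi$.

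To show that the hypotheses of (a) are natural, I will explain how they arise from the two earlier results. With $(m,q)$ fixed, each specification $j$ has a per-specification test statistic. Under persistent confounding in the tested direction, Theorem~\ref{thm:obs_collapse}(a) gives per-specification null rejection rates $\alpha_j(n) \to 1$. Under the sequential scheme of Definition~\ref{def:spec_search}, the probability of publishing a significant result under $H=0$ is then at least the probability that the first specification is significant. That probability is $\alpha_1(n) \to 1$, so $\aeff \to 1$ and no independence assumption is needed. The same argument under $H=1$, using consistency (condition (iii)), gives $1-\beta_{\mathrm{eff}} \to 1$.

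This monotone lower bound by the first specification is the only step needing care. I would use it rather than the closed form of Lemma~\ref{lem:effective_errors}, because that lemma assumes independence across specifications, which need not hold here.

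Part (b) follows directly from Proposition~\ref{prop:spec_collapse}(b). With $b=0$, the nominal rates $(\alpha,1-\beta)$ apply to each specification. With $q=0$ we have $s_0 = 1-\alpha$ and $s_1 = \beta$, so Lemma~\ref{lem:effective_errors} gives
\[
\aeff = 1-(1-\alpha)^m \to 1, \qquad 1-\beta_{\mathrm{eff}} = 1-\beta^m \to 1,
\]
using $\alpha>0$ and $\beta<1$. The ratio therefore tends to $1$. The one point to flag is that Proposition~\ref{prop:spec_collapse} assumed $\Lambda>1$, but the direct computation above does not need that assumption.
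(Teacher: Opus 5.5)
Your proof is correct and follows the paper's: part (a) is the quotient of two limits that both equal $1$, and part (b) follows from Proposition~\ref{prop:spec_collapse}(b), whose content is your explicit $q=0$ computation $\aeff = 1-(1-\alpha)^m$, $1-\beta_{\mathrm{eff}} = 1-\beta^m$ from Lemma~\ref{lem:effective_errors}. Your two additions go beyond what the paper proves and are sound: the first-specification lower bound shows how the hypotheses of (a) arise without any independence assumption, and you correctly note that (b) does not need $\Lambda>1$.
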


\begin{proof}
See Appendix~\ref{app:double_collapse}.
\end{proof}

Under part (a), confounding drives $\aeff(n,m,q) \to 1$; specification search may already have pushed it above its nominal level. The Bayes-update factor (the ratio of true-positive to false-positive rates) then becomes a ratio of two quantities converging to~1, so the quotient converges to~1 as well. The two mechanisms shrink the Bayes update in sequence: specification search inflates $\alpha$; confounding erases whatever leverage remains.

\begin{example}[Double Collapse: Numerical Illustration]
Let $\pi = 0.10$, power $= 0.80$, nominal $\alpha = 0.05$, so $\Lambda = 16$ and $\PPV \approx 0.64$. Specification search inflates $\aeff$ to $0.20$, reducing effective leverage to $4$ and $\PPV$ to approximately $0.31$. Adding observational collapse pushes $\Leff$ toward $1$; as $\Leff \to 1$, $\PPV \to 0.10$.
\end{example}

This double-collapse configuration (observational data, flexible analysis, strong publication pressure) describes the historical candidate gene literature \citep{border2019} and portions of observational nutritional epidemiology \citep{ioannidis2018}. Single-channel reforms are insufficient in the double-collapse configuration: better covariate adjustment does not prevent $\aeff$ inflation from specification search, and pre-registration does not manufacture valid identification. Neither route requires bad faith.

%==========================================================
% SECTION 6: ESCAPE MECHANISMS
%==========================================================
\section{Escape from the Infeasible Regime}
\label{sec:escape}

The collapse results also point to the available escape routes. Three mechanisms can repair leverage or move a field toward the feasible regime. Pre-registration (Corollary~\ref{cor:prereg}) enforces nominal $\alpha$, a necessary but generally insufficient step. The remaining two, threshold tightening and replication, are developed here. Of these, replication pipelines are the most broadly applicable, because they bear directly on what constitutes the unit of scientific evidence.

\subsection{The Adaptive Escape in Randomized Experiments}

\begin{theorem}[Adaptive Escape]
\label{thm:adaptive_escape}
Consider a one-sided Gaussian $z$-test for a simple alternative with effect size $\theta_1 > 0$ and known standard error scale $\sigma/\sqrt{n}$, with $\alpha_n = 1-\Phi(c\sqrt{n})$ for some constant $0 < c < \theta_1/\sigma$. Then $\alpha_n \to 0$ exponentially, $(1-\beta_n) \to 1$, $\Lambda_n \to \infty$, and for any $\pi \in (0,1)$ and $\tau \in (0,1)$, $\PPV_n \geq \tau$ for all sufficiently large $n$. The required sample size grows as $n = O(\log \Lreq)$.
\end{theorem}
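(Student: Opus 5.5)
Write the test as: reject when $\bar X_n \ge c$, equivalently $\sqrt{n}\,\bar X_n/\sigma \ge c\sqrt{n}/\sigma$. I read the statement's threshold $\alpha_n = 1-\Phi(c\sqrt n)$ with $c$ measured in the same standardized units, so the rejection region is $\bar X_n \ge c$ with $0<c<\theta_1$ after absorbing $\sigma$. All constants below carry over either way.

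\emph{Type~I error.} Under $H=0$, $\alpha_n = 1-\Phi(c\sqrt n)$. The Mills ratio bound $1-\Phi(x)\le \phi(x)/x$ gives $\alpha_n \le e^{-c^2 n/2}/(c\sqrt{2\pi n})$. This is the claimed exponential decay.

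\emph{Power.} Under $H=1$, $1-\beta_n = \Phi\bigl(\sqrt n(\theta_1 - c)/\sigma\bigr)$. This tends to $1$ because $\theta_1 > c$; the strict inequality is exactly where the hypothesis $c<\theta_1/\sigma$ enters. Consequently $\Lambda_n = (1-\beta_n)/\alpha_n \ge \tfrac12\, c\sqrt{2\pi n}\, e^{c^2 n/2}$ once the power exceeds $1/2$, so $\Lambda_n\to\infty$.

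\emph{PPV target.} Fix $\pi,\tau\in(0,1)$. By Theorem~\ref{thm:certainty_bound}, $\PPV_n\ge\tau$ if and only if $\Lambda_n\ge\Lreq(\tau,\pi)$, and $\Lreq$ is a finite constant. Because $\Lambda_n\to\infty$, the inequality holds for all large $n$. Monotonicity of $\PPV$ in $\Lambda$, used in the proof of Theorem~\ref{thm:fixed_alpha_ceiling}, then gives the conclusion for every such $n$.

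\emph{Sample-size rate.} It suffices that $\tfrac12 e^{c^2 n/2}\ge \Lreq$, taking $n\ge 1/c$ so that the factor $c\sqrt{2\pi n}\ge 1$. Rearranging, $n \ge (2/c^2)\log(2\Lreq)$ suffices. Taking the maximum of this with the finite $n_0$ beyond which power exceeds $1/2$ gives $n = O(\log\Lreq)$, with constant about $2/c^2$.

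The main obstacle is keeping the uniformity honest: $n_0$ depends on $(\theta_1-c)/\sigma$ but not on $\Lreq$, so it is an additive constant and does not affect the $O(\log\Lreq)$ rate. The lower-order polynomial factors from the Mills ratio must also be shown harmless, which the choice $n\ge 1/c$ handles. If a matching lower bound is wanted, the same Mills ratio estimate in the other direction gives $\alpha_n\ge e^{-c^2n/2}/(C\sqrt n)$, so $n=\Theta(\log\Lreq)$.
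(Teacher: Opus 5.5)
Your argument is correct and takes the same route as the paper: you compute $\alpha_n$ and the power in closed form, conclude $\Lambda_n\to\infty$, and then apply the equivalence $\PPV\ge\tau\iff\Lambda\ge\Lreq$ from Theorem~\ref{thm:certainty_bound}. The paper only asserts $n\approx(2/c^2)\ln\Lreq$ ``since $\alpha_n$ decays exponentially,'' and your Mills-ratio bound makes that step rigorous with the same leading constant $2/c^2$.

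Two small repairs are needed. First, $n\ge 1/c$ does not give $c\sqrt{2\pi n}\ge 1$ when $c<1/(2\pi)$. You need $n\ge 1/(2\pi c^2)$ instead, which is still independent of $\Lreq$, so the rate is unaffected. Second, to get $\alpha_n=1-\Phi(c\sqrt n)$ with general $\sigma$, the rejection region should be $\bar X_n\ge c\sigma$, i.e.\ $Z_n\ge c\sqrt n$. The power is then $\Phi\bigl((\theta_1/\sigma-c)\sqrt n\bigr)$, which exceeds $1/2$ for every $n\ge 1$, so the separate $n_0$ and the uniformity discussion are unnecessary.
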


\begin{proof}
See Appendix~\ref{app:adaptive_escape}.
\end{proof}

Crucially, the Adaptive Escape requires \emph{jointly} increasing $n$ and tightening $\alpha_n$; sample size alone, under fixed $\alpha = 0.05$, cannot push PPV above the Fixed-$\alpha$ Ceiling. The clearest empirical instance is GWAS, which implements Bonferroni correction for $\sim 10^6$ variants ($\alpha = 5\times 10^{-8}$; \citealp{peer2008}), raising $\Lambda$ from~16 to $1.6\times 10^7$. The transition from candidate gene research to GWAS is a particularly clear example of a field escaping the infeasible regime through threshold tightening, a methodological discontinuity whose quantitative signature is consistent with a Kuhnian paradigm shift (Section~\ref{sec:landscape}).

\subsection{Replication Pipelines: The Primary Escape Mechanism}

A single study, even a perfectly conducted, pre-registered randomized experiment, is structurally insufficient for achieving the target reliability in any field operating below the critical prior. In such settings, the natural unit of evidence is the \emph{pipeline}.

\begin{theorem}[Replication Pipeline]
\label{thm:rep_leverage}
Assume $\Lambda > 1$. If a claim is accepted only if all $k$ independent pre-registered studies are significant, with each study conducted at the same nominal level $\alpha$ and power $1-\beta$, the combined leverage is
\begin{equation}
\label{eq:rep_leverage}
\Lambda^{(k)} = \left(\frac{1-\beta}{\alpha}\right)^k = \Lambda^k.
\end{equation}
For any target $\tau \in (0,1)$ and prior $\pi \in (0,1)$, there exists $k^*$ such that $\PPV(\Lambda^{(k)}) \geq \tau$ for all $k \geq k^*$.
\end{theorem}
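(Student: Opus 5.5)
The proof has two parts. First I compute the leverage of the pipeline's composite decision rule. Then I apply the Certainty Bound (Theorem~\ref{thm:certainty_bound}) to that composite rule.

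Let $T^{(k)} := \prod_{j=1}^k T_j$ be the pipeline's acceptance indicator, so the claim is accepted iff every study is significant. Assume, as in the Assumption Map and Remark~\ref{rem:independence}, that $T_1,\dots,T_k$ are conditionally independent given $H$. Then
\[
\Prob(T^{(k)}=1\mid H=1)=\prod_j \Prob(T_j=1\mid H=1)=(1-\beta)^k .
\]
In the same way, $\Prob(T^{(k)}=1\mid H=0)=\alpha^k$. So the pipeline is itself a binary test with effective level $\alpha^k\in(0,1)$ and power $(1-\beta)^k$. Its leverage is $(1-\beta)^k/\alpha^k=\Lambda^k$, which is \eqref{eq:rep_leverage}.

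Because Theorem~\ref{thm:certainty_bound} applies to any binary decision rule with given operating characteristics, it gives
\[
\log\frac{\PPV_k}{1-\PPV_k}=\log\frac{\pi}{1-\pi}+k\log\Lambda .
\]
This is the additive form that makes the evidence contributions accumulate.

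For the existence of $k^*$, the assumption $\Lambda>1$ gives $\log\Lambda>0$, so the right-hand side above increases strictly and without bound in $k$. By \eqref{eq:lambda_req}, $\PPV_k\ge\tau$ holds iff $\Lambda^k\ge\Lreq(\tau,\pi)$. I would therefore set
\[
k^*:=\max\Bigl\{1,\Bigl\lceil \log\Lreq(\tau,\pi)/\log\Lambda\Bigr\rceil\Bigr\}.
\]
For every $k\ge k^*$ we then have $\Lambda^k\ge\Lambda^{k^*}\ge\Lreq$, where the first inequality uses monotonicity of $\Lambda^k$ in $k$ (which needs $\Lambda>1$). If $\Lreq\le 1$, any $k\ge1$ already works. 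Since PPV is strictly increasing in leverage (proof of Theorem~\ref{thm:fixed_alpha_ceiling}), the conclusion holds for all $k\ge k^*$, not just at $k^*$.

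The only delicate step is the first one: justifying that the error rates of the composite rule factor. This rests entirely on conditional independence given $H$, not on marginal independence, since the $T_j$ are marginally dependent through $H$. I would state this assumption explicitly. I would also note that $\Lambda>1$ is essential to the argument: if $\Lambda=1$ the PPV stays at $\pi$ for every $k$, and if $\Lambda<1$ it decreases toward $0$.
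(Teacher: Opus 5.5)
Your proposal is correct and follows the paper's own proof: you factor the pipeline's error rates using conditional independence given $H$ to get $\Lambda^{(k)}=\Lambda^k$, then apply the Certainty Bound to the composite binary rule. The only difference is that you construct $k^*$ explicitly (the formula of the Minimum Pipeline Depth Corollary), whereas the paper argues only that $\Lambda^k\to\infty$ forces $\PPV\to 1$; your version also yields that corollary directly.
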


\begin{proof}
Under conditional independence given $H$, the combined false-positive rate is $\alpha^k$ and the combined true-positive rate is $(1-\beta)^k$, so $\Lambda^{(k)} = \Lambda^k$. Since $\Lambda > 1$, we have $\Lambda^k \to \infty$, and therefore $\PPV(\Lambda^{(k)}) \to 1$ by Theorem~\ref{thm:certainty_bound}. Hence, for any target $\tau \in (0,1)$, there exists $k^*$ such that $\PPV(\Lambda^{(k)}) \geq \tau$ for all $k \geq k^*$.
\end{proof}

This result concerns the evidential standard (``accept if and only if all $k$ studies are significant''). If only successful pipelines are selectively published, the effective pipeline-level $\alpha$ is inflated, and the guarantee requires using the effective rates. Under positive dependence across replications, $\alpha^k$ and $(1-\beta)^k$ are no longer exact; leverage multiplication becomes an upper bound.

Leverage multiplication is geometric: two independent replications at $\Lambda = 16$ yield $\Lambda^{(2)} = 256$, sufficient for $\tau = 0.95$ at priors as low as $\pi \approx 7\%$. Three replications yield $\Lambda^{(3)} = 4{,}096$, sufficient at $\pi \approx 0.5\%$. No sample size increase within a single study can achieve this, because the Fixed-$\alpha$ Ceiling binds before the target is reached. The multiplication is exact under conditional independence given $H$ (Remark~\ref{rem:independence}); shared methods or populations attenuate it.

\begin{corollary}[Minimum Pipeline Depth]
\label{cor:pipeline_depth}
If $\Lambda > 1$, the minimum number of studies in an all-significant independent pipeline (counting the original study) required to achieve $\PPV \geq \tau$ is
\[
k^* = \max\!\left\{1,\left\lceil \frac{\log \Lreq(\tau,\pi)}{\log \Lambda} \right\rceil\right\}.
\]
Equivalently, the minimum number of replications beyond the original study is $k^* - 1$.
\end{corollary}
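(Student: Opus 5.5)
The plan is to combine the Replication Pipeline Theorem (Theorem~\ref{thm:rep_leverage}) with the threshold form of the Certainty Bound (Theorem~\ref{thm:certainty_bound}, inequality~\eqref{eq:lambda_req}), and then solve for the smallest integer $k$.

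First, by Theorem~\ref{thm:rep_leverage}, under conditional independence given $H$ an all-significant pipeline of $k$ studies has leverage $\Lambda^{(k)}=\Lambda^k$. By~\eqref{eq:lambda_req}, applied to the pipeline as a single binary decision, $\PPV(\Lambda^{(k)})\geq\tau$ holds if and only if $\Lambda^k\geq\Lreq(\tau,\pi)$. The equivalence runs in both directions, because $\PPV$ is strictly increasing in leverage; this was shown in the proof of Theorem~\ref{thm:fixed_alpha_ceiling}.

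Second, I take logarithms. Since $\Lambda>1$, we have $\log\Lambda>0$, so the condition is equivalent to
\[
k\geq \frac{\log\Lreq(\tau,\pi)}{\log\Lambda}.
\]
The left side $\Lambda^k$ is strictly increasing in $k$. The feasible set of $k$ is therefore an up-set of the integers, and its minimum is the ceiling of the right-hand side. This is where the hypothesis $\Lambda>1$ is essential: for $\Lambda\le 1$ dividing by $\log\Lambda$ would flip or break the inequality, and no $k$ would work.

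Third, I handle the boundary. A pipeline must contain at least the original study, so $k\geq 1$. If $\Lreq\le\Lambda$ the ratio is at most $1$. If $\Lreq\le 1$, which happens when $\pi\ge\tau$, the ratio is nonpositive and the ceiling could be $0$ or negative. In both situations a single study already suffices and the minimum is $1$. This explains the $\max\{1,\cdot\}$ in the statement. The claim that $k^*-1$ replications are needed beyond the original is then immediate from counting.

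The main obstacle is only bookkeeping. One point is making sure the ceiling correctly handles equality: when the ratio is an integer, $\Lambda^k=\Lreq$ exactly gives $\PPV=\tau$, which satisfies the non-strict target. The other point is treating the degenerate regime $\Lreq\le 1$ explicitly via the $\max$. Neither step requires anything beyond monotonicity of $\PPV$ in $\Lambda$ and of $\Lambda^k$ in $k$.
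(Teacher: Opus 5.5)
Your proposal is correct and follows the same route the paper takes, where the corollary is read off directly from the Replication Pipeline Theorem: $\Lambda^{(k)}=\Lambda^k$, the threshold $\Lambda^k\ge\Lreq(\tau,\pi)$ from the Certainty Bound, logarithms using $\log\Lambda>0$, the ceiling, and $\max\{1,\cdot\}$ to cover the case $\Lreq\le\Lambda$. One small aside: your parenthetical that for $\Lambda\le 1$ ``no $k$ would work'' is too strong, since if $\Lreq\le\Lambda\le 1$ (a prior large enough that $\PPV\ge\tau$ already holds at $k=1$) a single study suffices, but this does not affect the argument under the stated hypothesis $\Lambda>1$.
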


Institutionally, the implication is direct. A field's standard of evidence should be a pre-registered replication pipeline, not a single study with a $p$-value. A single significant result is a component of evidence, not a unit of it. On this account, journals and funders that treat isolated findings as publishable conclusions are institutionalizing insufficient evidence.

%==========================================================
% SECTION 7: FIELD DYNAMICS
%==========================================================
\section{Field Dynamics}
\label{sec:dynamics}

This section develops stylized dynamic extensions that embed the Certainty Bound in field-level evolutionary models. The emphasis is on analytic transparency and threshold behavior rather than realistic estimation of specific fields' trajectories.

\subsection{The Field Lifetime Theorem}

As a field matures, its prior $\pi(t)$ tends to decline: genuine relationships are discovered and removed from the pool of open questions, while competitive pressure expands speculative hypotheses. Under the dynamics $\pi(t) = \pi_0 e^{-(\gamma+\delta)t}$:

\begin{theorem}[Field Lifetime]
\label{thm:field_lifetime}
Assume $\gamma + \delta > 0$, $\pi_0 > \pcrit(\tau, \alpha, \beta)$, and $\alpha$ and $\beta$ are held fixed. The field crosses $\pcrit$ at time
\begin{equation}
T^*(\tau) = \frac{1}{\gamma+\delta}\ln\!\left(\frac{\pi_0}{\pcrit(\tau, \alpha, \beta)}\right).
\end{equation}
For $t > T^*$, the target PPV is structurally unattainable.
\end{theorem}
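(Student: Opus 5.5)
The argument has two steps: solve for the crossing time, then invoke Theorem~\ref{thm:critical_prior} to conclude infeasibility after it. Since $\alpha$ and $\beta$ are held fixed, $\pcrit(\tau,\alpha,\beta)$ in \eqref{eq:pi_crit} is a time-independent constant in $(0,1)$. It is positive because $\tau\alpha>0$, and it is below $1$ because $(1-\tau)(1-\beta)>0$ under the model assumptions $\beta<1$ and $\tau<1$. Under the stated dynamics, $\pi(t)=\pi_0 e^{-(\gamma+\delta)t}$ is continuous and, because $\gamma+\delta>0$, strictly decreasing in $t$. At $t=0$ we have $\pi(0)=\pi_0>\pcrit$, and $\pi(t)\to 0<\pcrit$ as $t\to\infty$. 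Hence there is a unique crossing time.

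To find it, I will solve $\pi_0 e^{-(\gamma+\delta)T^*}=\pcrit$ directly. Taking logarithms gives $-(\gamma+\delta)T^*=\ln(\pcrit/\pi_0)$, which is the displayed formula for $T^*(\tau)$. The argument of the logarithm, $\pi_0/\pcrit$, exceeds $1$ by hypothesis, so $T^*>0$ and the crossing lies in the future, as the statement implicitly requires.

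For the second claim, take $t>T^*$. Strict monotonicity gives $\pi(t)<\pi(T^*)=\pcrit$. Part (b) of Theorem~\ref{thm:critical_prior}, applied at the frozen operating parameters $(\pi(t),\alpha,\beta)$, then yields $\Psi>1$ and $\PPV<\tau$. Since $\alpha$ is fixed, the Fixed-$\alpha$ Ceiling (Theorem~\ref{thm:fixed_alpha_ceiling}) additionally shows that sample-size increases alone cannot restore feasibility whenever $\tau$ lies above the ceiling. I will state this as a remark rather than as part of the claim, because the theorem holds $\beta$ fixed.

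There is no real obstacle here. The only point needing care is interpretive: ``structurally unattainable'' must be read relative to the fixed $(\alpha,\beta)$, exactly as in Theorem~\ref{thm:critical_prior}. The crossing itself is just the unique root of a strictly decreasing exponential.
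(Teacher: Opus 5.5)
Your proposal is correct and follows the paper's proof, which simply sets $\pi_0 e^{-(\gamma+\delta)t} = \pcrit$ and solves for $t$. Your monotonicity and uniqueness argument and your explicit appeal to Theorem~\ref{thm:critical_prior} for the claim about $t > T^*$ fill in steps the paper leaves implicit.
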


\begin{proof}
Set $\pi_0 e^{-(\gamma+\delta)t} = \pcrit$ and solve.
\end{proof}

\begin{example}[Effect of Threshold Tightening on Field Lifetime]
A field with $\pi_0 = 0.70$, $\gamma+\delta = 0.05$, power $= 0.80$, target $\tau = 0.95$. At $\alpha = 0.05$: $\pcrit = 0.543$, $T^* \approx 5$ years. At $\alpha = 0.005$: $\pcrit = 0.106$, $T^* \approx 38$ years. Reducing $\alpha$ tenfold extends the productive lifetime sevenfold.
\end{example}

\subsection{Generational Dynamics and the Degenerative Programme Criterion}

When follow-up research builds on published findings, false positives in one generation degrade the priors available to the next.

\begin{theorem}[Generational Dynamics]
\label{thm:generational_decay}
Let $\PPV_k$ denote the PPV of generation $k$, with follow-up hypotheses conditional on a parent finding being true with probability $\pi_c \in (0,1)$. The effective prior for generation $k+1$ is $\pi_{k+1} = \PPV_k \cdot \pi_c$, and the PPV evolves as
\begin{equation}
\label{eq:gen_decay}
\PPV_{k+1} = \frac{\PPV_k \pi_c \Lambda}{\PPV_k \pi_c \Lambda + (1 - \PPV_k \pi_c)}.
\end{equation}
\begin{enumerate}[label=(\alph*)]
\item \textbf{Collapse} ($\pi_c\Lambda \leq 1$): $\PPV_k \to 0$ monotonically. In the boundary case $\Lambda = 1$, $\PPV_{k+1} = \pi_c \PPV_k$, so $\PPV_k = \pi_c^k \PPV_0 \to 0$ for $\pi_c < 1$.
\item \textbf{Recovery} ($\pi_c\Lambda > 1$): if $\PPV_0 > 0$, then $\PPV_k \to (\pi_c\Lambda - 1)/[\pi_c(\Lambda-1)]$.
\end{enumerate}
\end{theorem}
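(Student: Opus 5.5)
The plan is to reduce the recursion to the Certainty Bound (Theorem~\ref{thm:certainty_bound}) and then analyze the resulting one-dimensional map, which is a Möbius (Beverton--Holt type) map.

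\textbf{Step 1: derive the recursion.} A follow-up hypothesis in generation $k+1$ is true only if its parent finding is true, which happens with probability $\PPV_k$, and it is conditionally true with probability $\pi_c$. So $\pi_{k+1}=\PPV_k\pi_c$. Substituting this prior into~\eqref{eq:ppv} gives~\eqref{eq:gen_decay} directly. Write $x_k:=\PPV_k\in(0,1]$, $c:=\pi_c$, and
\[
f(x)=\frac{c\Lambda x}{1+c(\Lambda-1)x}.
\]

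\textbf{Step 2: linearize.} The key device is the reciprocal substitution $y_k=1/x_k$, which turns the recursion into the affine map
\[
y_{k+1}=\frac{y_k}{c\Lambda}+\frac{\Lambda-1}{\Lambda}.
\]
This has the explicit solution $y_k=y^\ast+(c\Lambda)^{-k}(y_0-y^\ast)$, where $y^\ast=c(\Lambda-1)/(c\Lambda-1)$ whenever $c\Lambda\neq 1$.

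\textbf{Step 3: recovery case.} If $c\Lambda>1$, then $(c\Lambda)^{-k}\to 0$, so $y_k\to y^\ast$. Hence $\PPV_k\to 1/y^\ast=(\pi_c\Lambda-1)/[\pi_c(\Lambda-1)]$, which is part~(b). Here $\Lambda>1$ automatically, since $c<1$. The hypothesis $\PPV_0>0$ is exactly what makes $y_0$ finite. Convergence is monotone from either side because the map is affine.

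\textbf{Step 4: collapse case.} For part~(a), $c\Lambda\le 1$, and I would argue directly on $f$ so as to cover $\Lambda<1$ and the boundary $c\Lambda=1$ uniformly. Note that $f(x)<x$ for $x\in(0,1]$ is equivalent to
\[
c\bigl(\Lambda-(\Lambda-1)x\bigr)<1.
\]
The factor $\Lambda-(\Lambda-1)x=(1-x)\Lambda+x\cdot 1$ is a convex combination of $\Lambda$ and $1$. Hence $c\bigl(\Lambda-(\Lambda-1)x\bigr)\le c\max(\Lambda,1)=\max(c\Lambda,c)\le 1$. Strictness follows because $c<1$, together with: if $c\Lambda=1$ then $\Lambda>1$, and the convex combination is strictly below $\Lambda$ for $x>0$.

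So $(x_k)$ is strictly decreasing and bounded below by $0$, and it converges to a fixed point of the continuous map $f$ in $[0,1)$. The fixed points are $0$ and $x^\ast=(c\Lambda-1)/(c(\Lambda-1))$, so it remains to rule out $x^\ast$ as a limit:
\begin{itemize}
\item if $\Lambda>1$ and $c\Lambda\le1$, then $x^\ast\le 0$;
\item if $\Lambda<1$, then $x^\ast\le1$ would require $c\Lambda-1\ge c(\Lambda-1)$, i.e.\ $c\ge1$, which is false, so $x^\ast>1$.
\end{itemize}
Hence the limit is $0$. In the boundary case $\Lambda=1$, the map is $f(x)=cx$, giving $\PPV_k=\pi_c^k\PPV_0$ immediately.

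\textbf{Main obstacle.} The delicate step is the collapse regime's edge cases. These are $\Lambda<1$, where the additive constant in the linearized map is negative so the $y$-recursion is not obviously divergent, and $c\Lambda=1$, where the geometric factor equals one. That is why I would handle part~(a) through the monotonicity inequality $f(x)<x$ plus the fixed-point location argument, rather than through the explicit solution.
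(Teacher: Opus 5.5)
Your proof is correct. In part~(a) you follow essentially the paper's route. The paper isolates the numerator factor $g(x)=\pi_c\Lambda-1-x\pi_c(\Lambda-1)$ of $f(x)-x$, shows $g\le 0$ on $(0,1]$, and concludes by monotone convergence to the fixed point $0$. Your inequality $\pi_c(\Lambda-(\Lambda-1)x)<1$ is exactly $g(x)<0$; you should state that the denominator $(1-\pi_c x)+\pi_c\Lambda x$ is positive before cross-multiplying. The convex-combination observation then gives strictness in one line.

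Your check that $x^\ast>1$ when $\Lambda<1$ is more careful than the paper. The paper says a positive fixed point exists only if $\pi_c\Lambda>1$, which is true only for fixed points in $(0,1]$. Once you have strict $f(x)<x$ on $(0,1]$, however, the check is redundant: any limit $L\in(0,1)$ would satisfy $f(L)<L$.

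Part~(b) is where you genuinely differ. The paper argues qualitatively ($f(x)>x$ below $x^\ast$, $f(x)<x$ above) and invokes monotone convergence. That tacitly needs $f$ to be increasing, so that iterates cannot jump across $x^\ast$. Your reciprocal substitution conjugates the M\"obius map to an affine contraction. The closed form then gives convergence, monotonicity, and an exact rate at once, with $1/\PPV_k-1/x^\ast$ shrinking by the factor $(\pi_c\Lambda)^{-1}$ each generation, and no invariance argument is needed.

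The same closed form would in fact have settled part~(a) as well, despite the obstacle you anticipated. For $\pi_c\Lambda<1$ one has $y^\ast<1\le y_0$ in every subcase ($\Lambda>1$, $=1$, $<1$), so $(\pi_c\Lambda)^{-k}(y_0-y^\ast)\to\infty$. At $\pi_c\Lambda=1$ it gives $1/\PPV_k=1/\PPV_0+k(1-\pi_c)$. So on the boundary the collapse is only harmonic, a quantitative refinement the paper's monotone argument does not give.
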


\begin{proof}
See Appendix~\ref{app:generational}.
\end{proof}

Algebraically, $\pi_c\Lambda \leq 1$ is identical to the Majority-False Threshold applied to follow-up research, providing a quantitative bridge to \citeauthor{lakatos1978}'s (\citeyear{lakatos1978}) concept of degenerative research programmes. Call $\pi_c\Lambda$ the programme's \textbf{progress ratio}.

\begin{corollary}[Degenerative Programme Criterion]
\label{cor:degenerative}
A research programme building on published findings is \emph{provably degenerative} (generating successive generations of decreasing reliability converging to noise) if and only if its progress ratio satisfies $\pi_c \Lambda \leq 1$.
\end{corollary}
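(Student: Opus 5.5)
The plan is to read the corollary as a restatement of Theorem~\ref{thm:generational_decay}, after fixing what ``provably degenerative'' means. I take it to mean: for every initial reliability $\PPV_0 \in (0,1]$, the sequence $(\PPV_k)$ is strictly decreasing and converges to $0$. The degenerate start $\PPV_0 = 0$ is a fixed point and is excluded, since there is then nothing to degrade. Write the recursion~\eqref{eq:gen_decay} as $\PPV_{k+1} = f(\PPV_k)$ with
\[
f(x) = \frac{a x}{a x + 1 - \pi_c x}, \qquad a := \pi_c\Lambda .
\]
This is exactly the map of Theorem~\ref{thm:generational_decay}, since $\PPV_k\pi_c\Lambda = a\,\PPV_k$. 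The two directions of the ``if and only if'' then correspond to parts (a) and (b) of that theorem.

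\textbf{Sufficiency} ($\pi_c\Lambda \le 1 \Rightarrow$ degenerative). Part~(a) of Theorem~\ref{thm:generational_decay} already gives monotone convergence to $0$. To make ``strictly decreasing'' explicit and self-contained, I will show $f(x) < x$ for every $x \in (0,1]$.

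Clearing the positive denominator, $f(x) < x$ is equivalent to $a - 1 < x(a - \pi_c) = x\pi_c(\Lambda - 1)$. There are two cases.
\begin{itemize}
\item If $\Lambda \ge 1$, the left side is $\le 0$ and the right side is $\ge 0$. Equality on both sides would force $a = 1$ and $\Lambda = 1$ together, hence $\pi_c = 1$, which is excluded.
\item If $\Lambda < 1$, the right side is minimized at $x = 1$, where it equals $a - \pi_c$. Since $\pi_c < 1$, we have $a - \pi_c > a - 1$.
\end{itemize}
So $\PPV_k$ is strictly decreasing and bounded below by $0$, hence convergent. Its limit is a fixed point of $f$ in $[0,\PPV_0)$. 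In this regime the only such fixed point is $0$, because the nonzero fixed point $(a-1)/[\pi_c(\Lambda-1)]$ is either nonpositive or undefined.

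\textbf{Necessity} ($\pi_c\Lambda > 1 \Rightarrow$ not degenerative). Here $\Lambda > 1/\pi_c > 1$. Part~(b) of Theorem~\ref{thm:generational_decay} gives, for any $\PPV_0 > 0$,
\[
\PPV_k \to x^\ast = \frac{a-1}{\pi_c(\Lambda-1)}.
\]
Both the numerator and the denominator are strictly positive, so $x^\ast > 0$. The sequence therefore does not converge to noise, and the programme is not degenerative. I will also check that $x^\ast \le 1$: this is equivalent to $a - 1 \le a - \pi_c$, which holds. So the limit is a genuine reliability level, and the dichotomy is clean.

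\textbf{Main obstacle.} The step I expect to be delicate is not the algebra but pinning down the definition of ``degenerative'' so that the equivalence is exact. It must include the boundary case $\pi_c\Lambda = 1$, where decay to $0$ is sublinear rather than geometric, and the anti-evidential range $\Lambda < 1$, both of which part~(a) covers. It must also exclude the trivial start $\PPV_0 = 0$. Once the definition is fixed as above, each direction follows directly from the corresponding part of Theorem~\ref{thm:generational_decay} together with the inequality $f(x) < x$.
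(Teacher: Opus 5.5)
Your proposal is correct and follows the paper's route: the paper gives no separate proof and reads the corollary directly off parts (a) and (b) of Theorem~\ref{thm:generational_decay}, and your strict inequality $f(x)<x$ on $(0,1]$ slightly sharpens the appendix's $f(x)\le x$. One small slip: when $\Lambda<1$ the nonzero fixed point $(\pi_c\Lambda-1)/[\pi_c(\Lambda-1)]$ is not ``nonpositive or undefined'' but positive (in fact greater than $1$), though this does not affect your conclusion because your strict inequality already excludes any fixed point in $(0,1]$.
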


A degenerative programme has placed itself in the majority-false regime for its own successors. When the progress ratio falls at or below~1, recovery requires either raising $\pi_c$ (restricting follow-up to higher-quality candidates) or raising $\Lambda$ (stricter thresholds or replication pipelines). Without such changes, each generation inherits a weaker prior and the literature drifts toward noise.

\begin{example}[Collapse and Recovery]
\textit{Collapse.} Candidate gene research with $\Lambda = 7$, $\pi_c = 0.10$: $\pi_c\Lambda = 0.7 < 1$.
\begin{center}
\begin{tabular}{cccc}
\toprule
Gen.\ $k$ & $\pi_k$ & $\PPV_k$ & False pos.\ per true pos. \\
\midrule
0 & 0.020 & 0.125 & 7.0 \\
1 & 0.013 & 0.081 & 11.3 \\
2 & 0.008 & 0.054 & 17.4 \\
3 & 0.005 & 0.037 & 26.2 \\
\bottomrule
\end{tabular}
\end{center}
\textit{Recovery.} GWAS with $\Lambda = 1.6\times10^7$, $\pi_c = 0.50$: $\pi_c\Lambda \gg 1$. PPV recovers rapidly.
\end{example}

\subsection{Self-Reinforcing Degeneration}
\label{subsec:self_reinforcing}

Once established, this feedback compounds. False positives spawn speculative follow-ups, lowering $\pi$ for the next generation; depressed PPV generates still more false positives, which lower $\pi$ further. Each link in this chain is an instance of the Certainty Bound applied to a degraded prior. The mechanism does not require increasing misconduct; it requires only that follow-up research treats published findings as informative about where to look next.

Fields in this regime cannot self-correct through incremental improvement alone. Larger samples and more careful statistical practice cannot arrest the decline unless they raise $\pi_c$ or materially increase effective leverage $\Lambda$. In the stylized dynamics here (fixed $\Lambda$ and fixed $\pi_c$), these interventions leave the structural parameters unchanged. Escape therefore requires an exogenous change in one of these parameters.

The candidate gene literature did not recover through gradual refinement; it required the discontinuous adoption of GWAS \citep{border2019}, which raised $\Lambda$ by six orders of magnitude. Particle physics's $5\sigma$ threshold \citep{cowan2011,atlas2012,cms2012} and the adoption of registered reports \citep{chambers2013} represent analogous exogenous interventions. The degenerative programme criterion identifies not only which fields are in difficulty but why recovery requires exogenous changes to the operating parameters rather than incremental refinement within the existing paradigm.

\begin{remark}[Scope of the Dynamics Model]
\label{rem:exogenous}
The self-reinforcing degeneration model is stylized: it assumes fixed $\Lambda$ and a deterministic transmission rule $\pi_{k+1} = \PPV_k \cdot \pi_c$. Real fields exhibit stochastic variation, partial updating, and heterogeneous subprogrammes. The model's value lies in identifying the qualitative threshold ($\pi_c\Lambda \leq 1$) below which the direction of drift is structurally determined, not in quantitative prediction of specific field trajectories.
\end{remark}

%==========================================================
% SECTION 8: RELIABILITY LANDSCAPE
%==========================================================
\section{A Reliability Landscape}
\label{sec:landscape}

\subsection{The Reliability Landscape and Kuhnian Transitions}

The reliability landscape is defined by two structural parameters: experimental leverage $\Lambda$ on one axis and prior probability $\pi$ on the other. The feasibility boundary for target $\tau$ is the curve $\pi\Lambda = \tau(1-\pi)/(1-\tau)$, equivalently $\Psi = 1$. Fields above this curve operate in the feasible regime; fields below, in the infeasible regime.

Normal science, in this landscape, operates at a fixed position. Incremental improvements (slightly larger samples, marginal improvements in measurement) produce continuous, small movements. A fundamental transition in methodology, however, produces a discontinuous rightward jump in $\Lambda$ that can cross the feasibility boundary in a single step. The GWAS transition jumped six orders of magnitude in $\Lambda$; particle physics's adoption of the $5\sigma$ criterion \citep{cowan2011,atlas2012,cms2012} provides an analogous illustration of threshold tightening as an institutional evidential standard. Neither was possible through incremental improvement; both required replacing the standard of evidence itself.

Movement across the feasibility boundary captures a quantitative signature of what \citeauthor{kuhn1962} (\citeyear{kuhn1962}) called a paradigm shift. The direction of explanation matters: boundary crossings do not \emph{define} paradigm shifts, but methodological revolutions often produce them. What Kuhn described qualitatively as revolution has, in this framework, a measurable signature: a discontinuous increase in leverage that crosses the $\Psi=1$ boundary.

\begin{figure}[htbp]
\centering
\begin{tikzpicture}
\begin{axis}[
    width=13cm, height=10cm,
    xmode=log,
    ymode=log,
    xlabel={Experimental Leverage $\Lambda = (1-\beta)/\alpha$},
    ylabel={Prior Probability $\pi$},
    xmin=1, xmax=1e8,
    ymin=1e-6, ymax=1.0,
    legend style={at={(0.03,0.03)}, anchor=south west, font=\small},
    grid=major,
    grid style={line width=0.3pt, gray!40},
    clip=false,
]
\addplot[thick, dashed, black, domain=1:1e8, samples=200]
    {19/(x + 19)};
\addlegendentry{Feasibility boundary ($\tau=0.95$, $\Psi=1$)}
\addplot[thick, dotted, black, domain=1:1e8, samples=200]
    {1/(x + 1)};
\addlegendentry{Majority-false boundary ($\PPV=0.50$)}
\addplot[gray, domain=1:1e8, samples=100]
    {4/(x + 4)};
\addlegendentry{$\PPV=0.80$ contour}

\addplot[only marks, mark=*, mark size=3pt, blue] coordinates {(7, 0.10)};
\addplot[only marks, mark=*, mark size=3pt, blue] coordinates {(10, 0.02)};
\addplot[only marks, mark=*, mark size=3pt, blue] coordinates {(12, 0.08)};
\addplot[only marks, mark=*, mark size=3pt, red] coordinates {(16, 0.30)};
\addplot[only marks, mark=*, mark size=3pt, red] coordinates {(18, 0.25)};
\addplot[only marks, mark=*, mark size=3pt, green!60!black] coordinates {(1.6e7, 1e-5)};
\addplot[only marks, mark=*, mark size=3pt, green!60!black] coordinates {(3.3e6, 0.90)};

\draw[blue, thin] (axis cs:7, 0.10) -- (axis cs:500, 0.10);
\node[blue, font=\small, anchor=west] at (axis cs:500, 0.10) {Pre-reform psych};
\draw[blue, thin] (axis cs:12, 0.08) -- (axis cs:500, 0.03);
\node[blue, font=\small, anchor=west] at (axis cs:500, 0.03) {Nutritional epi};
\draw[blue, thin] (axis cs:10, 0.02) -- (axis cs:500, 0.008);
\node[blue, font=\small, anchor=west] at (axis cs:500, 0.008) {Candidate genes};
\draw[red, thin] (axis cs:16, 0.30) -- (axis cs:500, 0.50);
\node[red, font=\small, anchor=west] at (axis cs:500, 0.50) {Well-powered RCT};
\draw[red, thin] (axis cs:18, 0.25) -- (axis cs:500, 0.25);
\node[red, font=\small, anchor=west] at (axis cs:500, 0.25) {Pre-reg psych};
\node[green!60!black, font=\small, anchor=north west] at (axis cs:1.8e7, 8e-6) {GWAS};
\node[green!60!black, font=\small, anchor=south west] at (axis cs:3.8e6, 0.92) {Particle physics};
\end{axis}
\end{tikzpicture}
\caption{Reliability landscape for scientific fields. Points represent illustrative calibrations, not precise measurements; their purpose is regime visualization. The dashed line is the feasibility boundary for $\tau = 0.95$ ($\Psi = 1$): fields above and to the right are feasible. The dotted line is the majority-false boundary. Blue: majority-false exemplars. Red: improved but still infeasible. Green: feasible.}
\label{fig:landscape}
\end{figure}

\subsection{Field Calibration}

As a field matures, $\pi$ declines and the field drifts downward on the vertical axis. Without compensating threshold tightening (a rightward shift), it crosses the feasibility boundary. Table~\ref{tab:fields} maps the landscape across representative fields.

The purpose of Table~\ref{tab:fields} is regime visualization under plausible parameterizations, not precise field-level estimation.

\begin{table}[htbp]
\centering
\caption{Illustrative regime-mapping scenarios for representative fields at target $\tau = 0.95$.}\label{tab:fields}
\small
\setlength{\tabcolsep}{0pt}
\begin{tabular*}{\textwidth}{@{\extracolsep{\fill}} lcccccccc}
\toprule
\textbf{Field} & $\alpha$ & \textbf{Power} & $\pi$ & $\Lambda$ & $\Psi$ & \textbf{PPV} & \textbf{Ceiling} & \textbf{Regime} \\
\midrule
Candidate genes & 0.05 & 0.50 & 0.02 & 10 & 93 & 17\% & 29\% & Maj.-false \\
Pre-reform psych & 0.05 & 0.35 & 0.10 & 7 & 24 & 44\% & 69\% & Infeasible \\
Nutritional epi & 0.05 & 0.60 & 0.08 & 12 & 18 & 51\% & 63\% & Infeasible \\
Well-powered RCT & 0.05 & 0.80 & 0.30 & 16 & 2.8 & 87\% & 90\% & Infeasible \\
Pre-reg psych & 0.05 & 0.90 & 0.25 & 18 & 3.2 & 86\% & 87\% & Infeasible \\
GWAS & $5\!\times\!10^{-8}$ & 0.80 & $10^{-5}$ & $1.6\!\times\!10^7$ & 0.12 & 99\% & $\!\sim\!100\%$ & Feasible \\
Particle physics & $3\!\times\!10^{-7}$ & 0.9999 & 0.90 & $3.3\!\times\!10^6$ & $6\!\times\!10^{-7}$ & $\!\sim\!100\%$ & $\!\sim\!100\%$ & Feasible \\
\bottomrule
\end{tabular*}
\smallskip
{\footnotesize\textit{Note.} All $\pi$ values are illustrative calibrations, not precise estimates. For pre-reform psychology, $\pi = 0.10$ is a calibration parameter consistent with scenarios in \citet{ioannidis2005} and the retrodiction in Section~\ref{sec:replication}. Candidate gene calibration is chosen to be consistent with the well-documented failure of the literature \citep{border2019}; this row is among the most robust because the literature's failure is well documented. Nutritional epidemiology values draw on \citet{ioannidis2018} and are illustrative; the infeasibility classification is robust across $\pi \in [0.03, 0.20]$. The RCT row is the most sensitive: it crosses into feasible only if $\pi \gtrsim 0.54$ at the stated parameters. Pre-registered psychology ($\pi \approx 0.25$) reflects the conjecture that pre-registration accompanies more theory-driven research with higher base rates; this is the most uncertain estimate. GWAS parameters reflect Bonferroni correction for $\sim 10^6$ variants \citep{peer2008}. Particle physics uses an illustrative $5\sigma$-level threshold (one-sided $\alpha \approx 3\times 10^{-7}$) \citep{cowan2011}. Sensitivity to $\pi$ is explored in Table~\ref{tab:sensitivity}.}
\end{table}

Even pre-registered psychology remains in the infeasible regime at $\Psi \approx 3.2$, suggesting that pre-registration alone is insufficient without either stricter thresholds or explicit replication requirements.

%==========================================================
% SECTION 9: STRUCTURAL REQUIREMENTS
%==========================================================
\section{Structural Requirements for Reliable Research}
\label{sec:design}

For fields aiming at a declared reliability target $\tau$ under binary significance-based publication architectures, four necessary conditions follow for operating in the feasible regime. Within this framework, these are design requirements implied by the mathematics, not discretionary guidelines.

\textbf{1. Threshold calibration.} A field's significance threshold must satisfy
\begin{equation}
\label{eq:alpha_req}
\alpha \leq \alpha_{\max}(\pi, \beta, \tau) = \frac{(1-\beta)(1-\tau)\pi}{\tau(1-\pi)}.
\end{equation}
For $\pi = 0.10$, $\tau = 0.95$, power 0.80: $\alpha_{\max} = 0.0047$. The \citet{benjamin2018} proposal ($p < 0.005$) is consistent with this requirement.

\textbf{2. Enforcement of the nominal level.} Pre-registration and registered reports enforce $\aeff \approx \alpha$ (Corollary~\ref{cor:prereg}). The nominal $\alpha$ is only binding if the effective $\aeff$ is controlled.

\textbf{3. Replication as the unit of evidence.} At conventional $\alpha = 0.05$ and target $\tau = 0.95$, the Fixed-$\alpha$ Ceiling falls below the target unless $\pi \gtrsim 0.49$. For the many fields operating well below this threshold, the standard must therefore be a replication pipeline: $k$ independent studies yield leverage $\Lambda^k$ (Theorem~\ref{thm:rep_leverage}). The scientific finding is the outcome of the pipeline.

\textbf{4. Valid identification for causal claims.} No threshold or sample size overcomes Observational Collapse (Theorem~\ref{thm:obs_collapse}). Observational studies can produce useful evidence for associations; it is the interpretation of significant associations as causal effects that requires valid identification.

Not all reforms are equally effective. Pre-registration controls $\aeff$ but does not change $\pi$ or nominal $\Lambda$. Threshold tightening increases $\Lambda$ directly. Replication multiplies $\Lambda$ geometrically. Increasing sample size under fixed $\alpha$ is bounded by the ceiling. Exhortations to ``better practice'' that do not change the operating parameters $(\pi, \alpha, 1-\beta)$ cannot move $\Psi$ and therefore cannot by themselves move a field across the feasibility boundary.

\subsection{A Worked Example: Preclinical Alzheimer's Research}

To illustrate these requirements concretely, consider the preclinical Alzheimer's literature as a stylized case. Suppose the prior probability of a preclinical target hypothesis being correct is $\pi \approx 0.05$, which is plausibly optimistic given the high attrition rate in CNS drug development (see \citealp{cummings2014}). Typical preclinical studies often operate at $\alpha = 0.05$, and low to moderate power is common in adjacent preclinical and neuroscience literatures \citep{button2013}; we use power $\approx 0.50$ here as an illustrative value.

At these parameters: $\Lambda = 10$, $\PPV = 0.34$, $\Psi = 36$, and the Fixed-$\alpha$ Ceiling is 51\%. The field is in the majority-false regime. More than half of its significant preclinical findings are expected to be false positives, not because of incompetent researchers, but because of the operating parameters.

What would repair look like? Tightening to $\alpha = 0.005$ raises $\Lambda$ to 100 and PPV to 84\%, but $\Psi$ remains above~1. A pipeline of $k = 2$ independent replications at $\alpha = 0.005$ yields $\Lambda^{(2)} = 10{,}000$ and PPV $> 99\%$. Authors could compute $\Psi$ at the design stage; journals could require its disclosure alongside standard power analyses; and $\Psi > 1$ without a replication plan could be treated as a design limitation requiring revision.

More broadly, a minimal evidential status report would include: the target reliability $\tau$; the assumed prior range; $\alpha$, power, and $\Psi$; planned replication depth $k$; and identification status for causal claims.

%==========================================================
% SECTION 10: DISCUSSION
%==========================================================
\section{Discussion}
\label{sec:discussion}

\subsection{Relation to Existing Work}

\citet{ioannidis2005} argued that many published findings are false under realistic assumptions. The present paper extends that line of argument in three directions: from diagnosis to structural infeasibility, from static snapshots to field dynamics, and from analytical critique to an empirical bridge connecting the framework to observed replication rates. Table~\ref{tab:comparison} summarizes the distinctions.

\begin{table}[htbp]
\centering
\caption{Comparison with prior approaches.}
\label{tab:comparison}
\small
\begin{tabularx}{\textwidth}{@{} >{\raggedright\arraybackslash}p{2.5cm} >{\raggedright\arraybackslash}X >{\raggedright\arraybackslash}X >{\raggedright\arraybackslash}X >{\raggedright\arraybackslash}X @{}}
\toprule
 & \textbf{Ioannidis (2005)} & \textbf{Selection models} & \textbf{Diagnostic tools} & \textbf{This paper} \\
\midrule
\textbf{Core question} & How often false? & True effect size? & Real signal? & Reliability achievable? \\
\addlinespace
\textbf{PPV ceiling} & Implicit & Not targeted & Not targeted & Explicit \\
\addlinespace
\textbf{Collapse} & None & Publication filter & None & Two collapse routes proved \\
\addlinespace
\textbf{Dynamics} & None & None & None & Degenerative criterion \\
\addlinespace
\textbf{Philosophy} & None & None & None & Popper, Kuhn, Lakatos \\
\bottomrule
\end{tabularx}
\end{table}

Prior work on publication bias spans selection models \citep{hedges1984, andrews2019} and behavioral models \citep{simmons2011, gelman2014}. Selection models estimate bias-corrected effect sizes; the Certainty Bound asks what PPV is structurally attainable given the selection environment. The two are compatible: one could use Andrews-Kasy-corrected estimates to recalibrate $\pi$ and power, then apply the Certainty Bound to the corrected parameters.

Three diagnostic tools deserve specific comparison. \emph{P-curve} \citep{simonsohn2014} tests whether a set of significant $p$-values has evidential value; \emph{z-curve} \citep{bartos2022} estimates expected replication and discovery rates. Both assess the evidential content of an existing literature. The Certainty Bound asks a complementary question: could this literature contain reliable signal given its design parameters? A literature might pass a $p$-curve test (the effects are real) while remaining in the infeasible regime (PPV too low for the reliability target).

\subsection{Connections to the Philosophy of Science}

The philosophical connections developed below are interpretive bridges, not claims of exact equivalence between the formalism and the historical philosophical accounts. The aim is not to reduce Popper, Kuhn, or Lakatos to a single metric, but to show that the framework supplies a quantitative structure for reliability constraints that these traditions describe qualitatively.

\subsubsection{Popper, Mayo, and the Severity Criterion}

\citet{popper1959} held that corroborated hypotheses (those surviving serious attempts at falsification) deserve provisional acceptance. The Certainty Bound sharpens this inference by specifying when such acceptance is epistemically warranted: when the testing procedure has sufficient leverage relative to the prior, as summarized by the infeasibility ratio $\Psi$.

The connection to \citeauthor{mayo1996}'s (\citeyear{mayo1996}) severity criterion is particularly close. Mayo's error-statistical framework holds that a hypothesis passes a severe test only when the procedure had high probability of detecting error if present: low $\alpha$ (the procedure rarely signals when nothing is there) and high $1-\beta$ (the procedure reliably detects genuine effects). Leverage $\Lambda = (1-\beta)/\alpha$ is a quantitative expression of this severity-like discrimination; it combines both components into a single ratio measuring how much more probable a significant result is under truth than under the null.

The qualification ``severity-like'' is deliberate: Mayo's full severity concept additionally involves the specificity of the error probed, a dimension not captured by the scalar $\Lambda$ alone. Popper required that corroborating tests be severe but lacked a formal apparatus for specifying \emph{how} severe; Mayo provided the conceptual framework; the Certainty Bound supplies a threshold. A test meets this severity-like standard for target reliability $\tau$ if and only if $\Psi < 1$. Below that threshold, a hypothesis that achieves significance has not been severely tested in this sense: the procedure was not capable of delivering the claimed reliability given the prior.

In the majority-false regime, even a hypothesis achieving significance at $\alpha = 0.05$ with reasonable power is more likely to be a false positive than a genuine effect. Falsification remains logically valid, but its epistemic yield is constrained by leverage.

\subsubsection{Kuhn and Paradigm Shifts}

In this landscape (Figure~\ref{fig:landscape}), normal science operates at a fixed position; paradigm shifts are discontinuous jumps that cross the $\Psi = 1$ boundary. A boundary crossing is not the \emph{definition} of a paradigm shift; it is a measurable quantitative signature often produced by methodological revolutions. The GWAS transition and the $5\sigma$ criterion in particle physics are the canonical examples: fields that crossed not by improving individual studies but by changing what counts as evidence. For the broader confirmation-theoretic context (including Bayesian debates), see \citet{earman1992}; for a wider philosophy-of-science overview, see \citet{gillies1993}.

\subsubsection{Lakatos and Degenerative Programmes}

The Degenerative Programme Criterion (Corollary~\ref{cor:degenerative}) gives \citeauthor{lakatos1978}'s (\citeyear{lakatos1978}) distinction between progressive and degenerative programmes a quantitative threshold: $\pi_c\Lambda \leq 1$. The criterion captures the reliability dimension of Lakatosian degeneracy, the empirical signature that research output converges to noise, rather than the full concept, which additionally involves theoretical stagnation and post-hoc adjustment of auxiliary hypotheses. When $\pi_c\Lambda \leq 1$, recovery requires exogenous changes to the operating parameters rather than incremental effort within the paradigm. The candidate gene-to-GWAS transition is both a paradigm shift in Kuhn's sense and an escape from degeneration in Lakatos's; the Certainty Bound supplies the threshold in each case.

\subsection{The Informativeness Paradox}

One implication of the analysis runs against usual intuition: in the majority-false regime, non-significant results are \emph{more informative} than significant ones. 

\begin{proposition}[Null Result Informativeness]
\label{prop:npv}
If the test has discrimination ($1 - \beta > \alpha$) and $\PPV < 1/2$, then $\NPV > 1/2 > \PPV$: non-significant results are more reliable indicators of the true state than significant ones.
\end{proposition}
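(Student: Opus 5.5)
The plan is to reduce both claims to threshold conditions on $\pi$, as in the Majority-False Threshold (Theorem~\ref{thm:majority_false}), and then compare the two thresholds. First I will write $\NPV := \Prob(H=0 \mid T=0)$ explicitly by Bayes' theorem, exactly as in the proof of Theorem~\ref{thm:certainty_bound}:
\[
\NPV = \frac{(1-\pi)(1-\alpha)}{(1-\pi)(1-\alpha) + \pi\beta}.
\]
The denominator is positive because $\pi\in(0,1)$ and $\alpha<1$. Hence $\NPV > 1/2$ if and only if $(1-\pi)(1-\alpha) > \pi\beta$. Equivalently,
\[
\pi < \pi_{\mathrm{N}} := \frac{1-\alpha}{(1-\alpha)+\beta},
\]
which is the mirror image of $\pi_{1/2}$, with the roles of $T=0$ and $T=1$ exchanged.

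Second, the hypothesis $\PPV < 1/2$ is, by Theorem~\ref{thm:majority_false}, equivalent to
\[
\pi < \pi_{1/2} = \frac{\alpha}{\alpha + (1-\beta)}.
\]
It therefore suffices to show $\pi_{1/2} < \pi_{\mathrm{N}}$ whenever $1-\beta > \alpha$. Once that holds, $\pi < \pi_{1/2}$ gives $\pi < \pi_{\mathrm{N}}$, so $\NPV > 1/2 > \PPV$.

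The comparison of thresholds is the only step needing any care, and it is routine. Both denominators are positive, so I will cross-multiply:
\[
\alpha(1-\alpha+\beta) < (1-\alpha)(\alpha+1-\beta).
\]
Expanding, the left side is $\alpha - \alpha^2 + \alpha\beta$ and the right side is $1 - \beta - \alpha^2 + \alpha\beta$. The inequality therefore reduces to $\alpha < 1-\beta$, which is exactly the discrimination assumption (equivalently $\Lambda > 1$, cf.\ Remark~\ref{rem:min_discrimination}).

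I do not expect a genuine obstacle. The one point to state explicitly is where discrimination is used. Without it, the threshold ordering can reverse, and $\PPV < 1/2$ would no longer force $\NPV > 1/2$.
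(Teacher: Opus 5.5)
Your proof is correct and takes essentially the same route as the paper's. Both reduce $\PPV<1/2$ and $\NPV>1/2$ to the thresholds $\pi<\alpha/[(1-\beta)+\alpha]$ and $\pi<(1-\alpha)/[(1-\alpha)+\beta]$, then show by cross-multiplication that the first threshold lies below the second exactly when $1-\beta>\alpha$.
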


\begin{proof}
See Appendix~\ref{app:npv}.
\end{proof}

For pre-reform psychology ($\alpha = 0.05$, $\beta = 0.65$, $\pi = 0.10$), the model gives $\NPV = 0.93$ and $\PPV = 0.44$. Under this parameterization, a null result is 93\% likely to be correct, whereas a significant result is only 44\% likely to be correct. Journals filtering for positive findings are therefore preferentially selecting the less reliable outcome class. In the majority-false regime, the file drawer does not merely hide ``nulls''; it hides reliability.

Two further implications follow. First, the publication filter produces a literature that actively misleads: false positives outnumber true positives among published significant findings. Second, the misinformation rate has a floor that cannot be reduced by increasing sample size under fixed $\alpha$:
\begin{equation}
\label{eq:misinfo_floor}
1 - \PPV \;\geq\; 1 - \PPV^{\mathrm{ceil}} = \frac{\alpha(1-\pi)}{\pi + \alpha(1-\pi)}.
\end{equation}
In fields below the majority-false threshold, pre-registering and publishing null results would shift the literature toward its more informative segment. The reliability of the published literature would improve not because more studies are run, but because the publication filter would no longer suppress the more reliable outcome class in this regime.

\subsection{Meta-Analysis}

When contributing studies each have effective leverage close to~1 and share the same identification failure, a meta-analysis can inherit the collapse: pooling then accumulates precision around a biased estimand rather than restoring valid identification. In the collapse limit, each low-leverage input contributes little evidential discrimination, so aggregation alone does not restore identification. Meta-analysis remains a powerful tool when included studies maintain adequate leverage and valid identification; the collapse-inheritance point applies specifically to low-leverage inputs with shared failure modes. Evidence hierarchies placing meta-analysis at the apex should therefore be conditional on the leverage and identification quality of included studies, not merely their number.

\subsection{Limitations and Open Problems}

Prior probabilities are not directly observable, so calibrations rely on meta-science. The framework does not, however, require precise $\pi$: it specifies what $\pi$ must be for a given testing configuration to achieve the claimed reliability. The independence assumption is a modeling benchmark (Remark~\ref{rem:independence}); dependence across studies can alter finite-sample leverage in either direction.

The analysis addresses binary significance claims, a deliberate coarsening of the full evidence available in any study. Continuous estimation and Bayesian methods can extract more information; the present results characterize the ceiling imposed by binary publication architectures. The framework does not claim that all fields have a single $\pi$, that all studies are independent, or that continuous evidence is bounded in this way. It provides feasibility constraints for binary significance architectures under stated operating parameters. The prior $\pi$ refers throughout to the probability that a hypothesis \emph{selected for testing} is true, a property of the field's hypothesis-generation process, not to the fraction of all conceivable hypotheses that are true.

The architectural and behavioral accounts are analytically distinct but empirically coupled: incentive structures lower $\pi$, while behavioral reforms such as registered reports \citep{chambers2013} enforce $\aeff \approx \alpha$ and may raise $\pi$ indirectly by reshaping which hypotheses are pursued. The architectural framework characterizes the constraints; behavioral reform is one mechanism for moving within them.

Three open problems follow naturally: empirical estimation of $\pi$ via bridge inversion across fields, asymmetric error cost structures, and PPV properties of sequential designs under publication pressure.

%==========================================================
% SECTION 11: CONCLUSION
%==========================================================
\section{Conclusion}
\label{sec:conclusion}

Within binary significance-based publication architectures, the Certainty Bound implies that a substantial component of the replication crisis is structural. Parameter ranges for pre-reform psychology, documented before the Open Science Collaboration's project, imply a replication rate of approximately 36\%, consistent with the observed figure. On this account, improving reliability requires structural reform: thresholds calibrated to field priors, nominal significance levels enforced through pre-registration or registered reports, replication pipelines institutionalized as the unit of evidence, and valid identification secured for causal claims in observational research.

The framework's connections to the philosophy of science are substantive rather than ornamental. It supplies quantitative conditions under which Popperian falsification is epistemically informative, gives Kuhnian methodological revolutions a measurable signature in the reliability landscape, and provides Lakatosian degenerative programmes with a reliability threshold.

The practical program is straightforward: compute $\Psi$ at the design stage, disclose it alongside standard power analyses, require replication pipeline standards where $\Psi > 1$, and distinguish associational from causal claims when identification is not secured. Within this framework, these are not optional refinements; they are design requirements implied by the mathematics of binary significance-based evidence.

Within a binary significance-based architecture, a field that ignores these requirements cannot achieve the reliability it claims, regardless of its researchers' diligence. That is not a moral indictment. It is a design diagnosis.

\newpage

\section*{Data, Code, and Materials Availability}

This manuscript is primarily theoretical and does not report analyses of an original dataset. The results, figures, and tables are derived from the formulas and parameter settings stated in the manuscript. An interactive web tool implementing the $\Psi$ diagnostic, replication pipeline calculator, and reliability landscape visualizer is available at \url{https://mpollanen.github.io/certainty-bound-tool/} (archived at \url{https://osf.io/c5wun}). Source code for the interactive tool is available via the archived project under a CC-BY 4.0 license. The tool runs entirely client-side; no data are transmitted.

\section*{Conflict of Interest and Funding}

The author declares no conflict of interest. The author acknowledges the support of the Natural Sciences and Engineering Research Council of Canada (NSERC), funding reference number RGPIN-2019-04085.

\section*{Author Contributions}

Marco Pollanen: Conceptualization, Formal analysis, Investigation, Methodology, Writing -- original draft, Writing -- review \& editing.

%==========================================================
% APPENDIX
%==========================================================
\newpage
\appendix
\section*{Appendix: Proofs of Secondary Results}

\subsection{Proof of Proposition~\ref{prop:jensen} (Prior Heterogeneity Penalty)}
\label{app:jensen}

Write $\PPV(\pi) = \pi\Lambda/[\pi(\Lambda-1)+1]$. Differentiating twice: $\PPV''(\pi) = -2\Lambda(\Lambda-1)/[\pi(\Lambda-1)+1]^3 < 0$ for $\Lambda > 1$, establishing strict concavity. Jensen's inequality gives $\E[\PPV(\Pi)] < \PPV(\bar\pi)$ (strict when $\Pi$ is non-degenerate and $\Lambda > 1$). A second-order Taylor expansion around $\bar\pi$ yields the approximation $\PPV(\bar\pi) - \E[\PPV(\Pi)] \approx \Lambda(\Lambda-1)\sigma^2_\pi/[\bar\pi(\Lambda-1)+1]^3$. \qed

\subsection{Proof of Theorem~\ref{thm:cost} (Cost of Discovery)}
\label{app:cost}

Under the i.i.d.\ assumption, each study yields a true positive with probability $\pi(1-\beta)$ and a false positive with probability $(1-\pi)\alpha$. Over $N$ independent studies, the expected counts are $N\pi(1-\beta)$ and $N(1-\pi)\alpha$, respectively, so their ratio is $(1-\pi)\alpha/[\pi(1-\beta)]$. This equals $(1-\PPV)/\PPV$ by direct substitution from the PPV formula. \qed

\subsection{Proof of Theorem~\ref{thm:obs_collapse} (Observational Collapse)}
\label{app:obs_collapse}

(a) Under $H_{\mathrm{causal}}=0$: $Z_n$ is asymptotically $N(\sqrt{n}\,b_n/\sigma, 1)$ by assumption. Since $\sqrt{n}\,b_n/\sigma \to \infty$, $\aeff(n) = \Prob(Z_n > z_\alpha \mid H_{\mathrm{causal}}=0) \to 1$.

(b) Under $H_{\mathrm{causal}}=1$: $1-\beta_n \to 1$ by consistency, so $\Leff = (1-\beta_n)/\aeff(n) \to 1$.

(c) By the Certainty Bound: $\PPV = \pi\Leff/[\pi\Leff + (1-\pi)] \to \pi$.

(d) For two-sided rejection ($|Z_n| > z_{\alpha/2}$): if $\sqrt{n}\,|b_n| \to \infty$, then $|Z_n| \to \infty$ in probability under $H_{\mathrm{causal}} = 0$, so $\aeff(n) \to 1$. \qed

\subsection{Proof of Lemma~\ref{lem:effective_errors} (Effective Error Rates)}
\label{app:spec_search}

Under $H_0$, the probability of significance on attempt $k$ is $\alpha \cdot s_0^{k-1}$ where $s_0 = (1-\alpha)(1-q)$. Summing over $k=1,\ldots,m$: $\aeff = \alpha(1-s_0^m)/(1-s_0)$. The analogous argument with $s_1 = \beta(1-q)$ gives the effective power formula. \qed

\subsection{Proof of Theorem~\ref{thm:ppv_adj} (Discrimination Loss)}
\label{app:disc_loss}

Substituting effective rates into the Certainty Bound:
\[
\Lambda_{\mathrm{eff}}^{\mathrm{pb}} = \frac{(1-\beta)\frac{1-s_1^m}{1-s_1}}{\alpha\frac{1-s_0^m}{1-s_0}} = \Lambda \cdot \underbrace{\frac{1-s_1^m}{1-s_0^m} \cdot \frac{1-s_0}{1-s_1}}_{D(q,m)}.
\]
Define $h(s) := \frac{1-s^m}{1-s} = \sum_{j=0}^{m-1} s^j$. Since $h$ is strictly increasing on $[0,1)$ and $s_1 < s_0$ (because $1-\beta > \alpha$ implies $\beta < 1-\alpha$, so $\beta(1-q) < (1-\alpha)(1-q)$), we have $h(s_1) < h(s_0)$, giving $D = h(s_1)/h(s_0) < 1$. \qed

\subsection{Proof of Proposition~\ref{prop:spec_collapse} (Saturation and Collapse)}
\label{app:saturation}

(a) With $q \in (0,1)$: $s_0, s_1 \in (0,1)$, so $s_0^m, s_1^m \to 0$. In the limit, $D(q,\infty) = (1-s_0)/(1-s_1)$ and $\Lambda_{\mathrm{sat}} = \Lambda \cdot D(q,\infty)$. This satisfies $1 < \Lambda_{\mathrm{sat}} < \Lambda$ because $q[(1-\beta)-\alpha] > 0$.

(b) With $q = 0$: $s_0 = 1-\alpha$, $s_1 = \beta$. As $m \to \infty$: $\aeff \to 1$ and $1-\beta_{\mathrm{eff}} \to 1$. Hence $\Lambda_{\mathrm{eff}}^{\mathrm{pb}} \to 1$ and $\PPV \to \pi$. \qed

\subsection{Proof of Theorem~\ref{thm:double_collapse} (Double Collapse)}
\label{app:double_collapse}

(a) Under the stated assumptions, $\aeff(n,m,q) \to 1$ and $1-\beta_{\mathrm{eff}}(n,m,q) \to 1$, hence $\Leff(n,m,q) = (1-\beta_{\mathrm{eff}}(n,m,q))/\aeff(n,m,q) \to 1$.

(b) Under specification search alone: by Proposition~\ref{prop:spec_collapse}(b), both rates $\to 1$, so $\Leff \to 1$. \qed

\subsection{Proof of Theorem~\ref{thm:adaptive_escape} (Adaptive Escape)}
\label{app:adaptive_escape}

With $\alpha_n = 1 - \Phi(c\sqrt{n})$: since $c\sqrt{n} \to \infty$, $\alpha_n \to 0$. Under $H_1$, $Z_n \sim N(\theta_1\sqrt{n}/\sigma, 1)$, so $1-\beta_n = \Phi((\theta_1/\sigma - c)\sqrt{n}) \to 1$ since $c < \theta_1/\sigma$. Thus $\Lambda_n \to \infty$. The required sample size for $\Lambda_n \geq \Lreq$ is approximately $n \approx (2/c^2)\ln\Lreq$ since $\alpha_n$ decays exponentially. \qed

\subsection{Proof of Theorem~\ref{thm:generational_decay} (Generational Dynamics)}
\label{app:generational}

Substituting $\pi_{k+1} = \PPV_k\pi_c$ into the Certainty Bound gives $\PPV_{k+1} = f(\PPV_k)$ where $f(x) = x\pi_c\Lambda/(x\pi_c\Lambda + 1 - x\pi_c)$.

\textit{Fixed points.} Setting $f(x)=x$: either $x=0$ or $x^* = (\pi_c\Lambda-1)/[\pi_c(\Lambda-1)]$, with the positive fixed point existing iff $\pi_c\Lambda > 1$.

\textit{Collapse ($\pi_c\Lambda \leq 1$).} The numerator factor $g(x) = \pi_c\Lambda - 1 - x\pi_c(\Lambda-1)$ satisfies $g(x) \leq 0$ for all $x \in (0,1]$, whether $\Lambda \geq 1$ or $\Lambda < 1$. Hence $f(x) \leq x$, the sequence is non-increasing, bounded below by~0, and the only fixed point is $x=0$. In the boundary case $\Lambda = 1$, $f(x) = \pi_c x$, so $\PPV_k = \pi_c^k \PPV_0 \to 0$ for $\pi_c < 1$.

\textit{Recovery ($\pi_c\Lambda > 1$).} For $x \in (0, x^*)$: $f(x) > x$ (increasing toward $x^*$). For $x > x^*$: $f(x) < x$ (decreasing toward $x^*$). By monotone convergence, $\PPV_k \to x^*$. \qed

\subsection{Proof of Proposition~\ref{prop:npv} (Null Result Informativeness)}
\label{app:npv}

$\NPV = (1-\pi)(1-\alpha)/[(1-\pi)(1-\alpha)+\pi\beta]$. We have $\PPV < 1/2$ iff $\pi < \alpha/[(1-\beta)+\alpha]$ and $\NPV > 1/2$ iff $\pi < (1-\alpha)/[(1-\alpha)+\beta]$. The first threshold is below the second iff $\alpha[(1-\alpha)+\beta] < (1-\alpha)[(1-\beta)+\alpha]$, which simplifies to $1-\beta > \alpha$ (the discrimination condition). Hence whenever $\PPV < 1/2$, we also have $\NPV > 1/2$. \qed

\end{document}